\documentclass[a4paper,UKenglish,cleveref, autoref, thm-restate]{lipics-v2021}

\pdfoutput=1 
\hideLIPIcs  


\bibliographystyle{plainurl}

\title{Finding the saddlepoint faster than sorting}

\titlerunning{Finding the saddlepoint faster than sorting} 


\author{Justin Dallant}{Department of Computer Science, Université libre de Bruxelles, Belgium} {Justin.Dallant@ulb.be}{https://orcid.org/0000-0001-5539-9037}{Supported by the French Community of Belgium via the funding of a FRIA grant.}

\author{Frederik Haagensen
}{Department of Computer Science, IT University of Copenhagen, Denmark}{haag@itu.dk}{https://orcid.org/0000-0002-4161-4442}{Supported by Independent Research Fund Denmark, grant 0136-00144B, ``DISTRUST'' project.}

\author{Riko Jacob}{Department of Computer Science, IT University of Copenhagen, Denmark}{rikj@itu.dk}{https://orcid.org/0000-0001-9470-1809}{}

\author{László Kozma}{Institut für Informatik, Freie Universität Berlin, Germany}{laszlo.kozma@fu-berlin.de}{https://orcid.org/0000-0002-3253-2373}{Supported by DFG grant
KO 6140/1-2.}

\author{Sebastian Wild}{Department of Computer Science, University of Liverpool, UK}{wild@liverpool.ac.uk}{https://orcid.org/0000-0002-6061-9177}{}

\authorrunning{J. Dallant,  F. Haagensen, R. Jacob, L. Kozma and S. Wild} 

\Copyright{Justin Dallant, Frederik Haagensen, Riko Jacob, László Kozma, and Sebastian Wild} 


\ccsdesc[500]{Theory of computation~Design and analysis of algorithms}

\keywords{saddlepoint, matrix, comparison, search} 

\category{} 

\relatedversion{} 



\acknowledgements{This work was initiated at Dagstuhl Seminar 23211 ``Scalable Data Structures''.}

\nolinenumbers 

\EventEditors{}
\EventNoEds{2}
\EventLongTitle{}
\EventShortTitle{arXiv 2023}
\EventAcronym{arXiv}
\EventYear{2023}
\EventDate{}
\EventLocation{}
\EventLogo{}
\SeriesVolume{}
\ArticleNo{}

\usepackage{mathtools}
\newcommand{\DeclareAutoPairedDelimiter}[3]{%
  \expandafter\DeclarePairedDelimiter\csname Auto\string#1\endcsname{#2}{#3}%
  \begingroup\edef\x{\endgroup
    \noexpand\DeclareRobustCommand{\noexpand#1}{%
      \expandafter\noexpand\csname Auto\string#1\endcsname*}}%
  \x}

\DeclareAutoPairedDelimiter\ceil{\lceil}{\rceil}
\DeclareAutoPairedDelimiter\floor{\lfloor}{\rfloor}
\DeclarePairedDelimiter\pars{(}{)}
\newcommand{\OO}[1]{O\pars*{#1}}
\renewcommand{\epsilon}{\varepsilon}

\usepackage{mdframed}  

\usepackage{thmtools}
\usepackage{thm-restate}

\usepackage{tikz}
\usepackage{algpseudocode}

\begin{document}

\maketitle

\begin{abstract}
A \emph{saddlepoint} of an $n \times n$ matrix $A$ is an entry of $A$ that is a maximum in its row and a minimum in its column. Knuth (1968) gave several different algorithms for finding a saddlepoint. The worst-case running time of these algorithms is $\Theta(n^2)$, and Llewellyn, Tovey, and Trick (1988) showed that this cannot be improved, as in the worst case all entries of $A$ may need to be queried.

A \emph{strict saddlepoint} of $A$ is an entry that is the strict maximum in its row and the strict minimum in its column. The strict saddlepoint (if it exists) is unique, and Bienstock, Chung, Fredman, Schäffer, Shor, and Suri (1991) showed that it can be found in time $\OO{n \lg{n}}$, 
where a dominant runtime contribution is sorting the diagonal of the matrix. This upper bound has not been improved since 1991. 
In this paper we show that the strict saddlepoint can be found in $\OO{n \lg^*{n}} \subset o(n\lg n)$ time, where $\lg^*$ denotes the very slowly growing \emph{iterated logarithm} function, coming close to the lower bound of $\Omega{(n)}$. In fact, we can also compute, within the same runtime, the \emph{value} of a non-strict saddlepoint, assuming one exists. Our algorithm is based on a simple recursive approach, a feasibility test inspired by searching in sorted matrices, and a relaxed notion of saddlepoint. 

\end{abstract}

\section{Introduction}

Saddlepoints are a central concept of mathematical analysis and numerical optimization. Informally, a saddlepoint of a function (or a surface) is a point where the derivatives (slopes) in orthogonal directions vanish, yet the point is not a local minimum or maximum. In this paper we are concerned with a discrete analogue: an entry of a matrix $A$ is a saddlepoint, if it is simultaneously the maximum in its row and the minimum in its column. 

\begin{figure}[htb]
 
 \begin{minipage}[c]{.4\linewidth}
		\includegraphics[width=\linewidth]{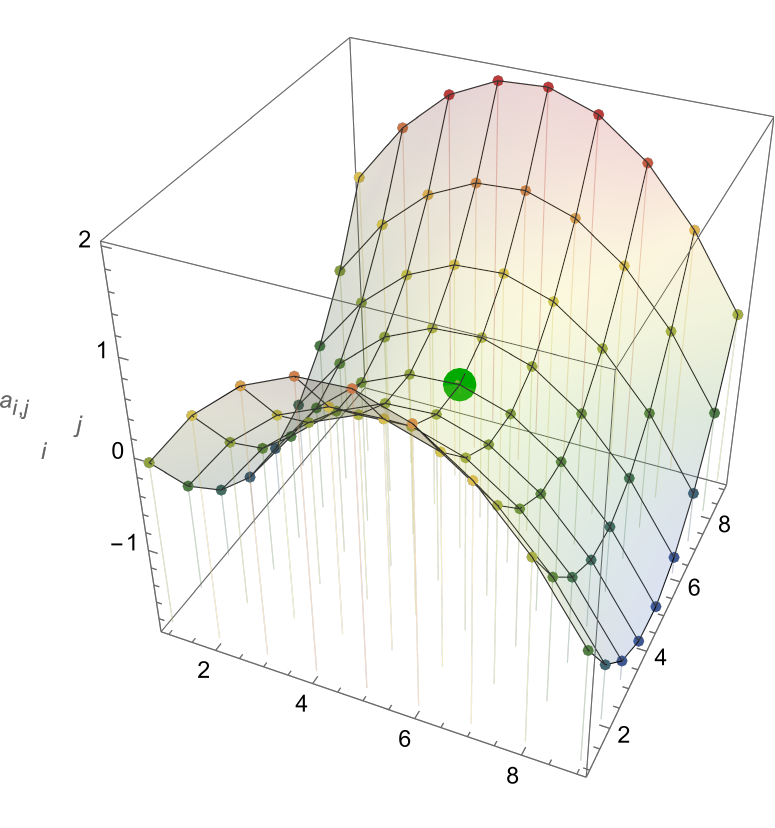}
	\end{minipage}%
	\begin{minipage}[c]{.4\linewidth}
	~\\[0\baselineskip]
	\hspace*{-1em}\resizebox{1.45\linewidth}!{\quad\quad~~~~~~~~~~~~$A=\left(
	\begin{array}{ccccccccc}
	 -0.08 & 0.55 & 0.98 & 1.21 & 1.24 & 1.07 & 0.7 & 0.13 & -0.64 \\
	 -0.69 & -0.06 & 0.37 & 0.6 & 0.63 & 0.46 & 0.09 & -0.48 & -1.25 \\
	 -1.1 & -0.47 & -0.04 & 0.19 & 0.22 & 0.05 & -0.32 & -0.89 & -1.66 \\
	 -1.31 & -0.68 & -0.25 & -0.02 & 0.01 & -0.16 & -0.53 & -1.1 & -1.87 \\
	 -1.32 & -0.69 & -0.26 & -0.03 & 0 & -0.17 & -0.54 & -1.11 & -1.88 \\
	 -1.13 & -0.5 & -0.07 & 0.16 & 0.19 & 0.02 & -0.35 & -0.92 & -1.69 \\
	 -0.74 & -0.11 & 0.32 & 0.55 & 0.58 & 0.41 & 0.04 & -0.53 & -1.3 \\
	 -0.15 & 0.48 & 0.91 & 1.14 & 1.17 & 1.0 & 0.63 & 0.06 & -0.71 \\
	 0.64 & 1.27 & 1.7 & 1.93 & 1.96 & 1.79 & 1.42 & 0.85 & 0.08 \\
	\end{array}
	\right)$}
	\end{minipage}
	\caption{(\emph{right}) A $9\times 9$-matrix $A$ with a (strict) saddlepoint at $a_{5,5}=0$. (\emph{left}) A 3D plot of the matrix entries with the saddlepoint highlighted in green.}
	\label{}
\end{figure}

If $A$ represents the payoff matrix of a two-player zero-sum game, then saddlepoints of~$A$ give the \emph{value} of the game, corresponding exactly to the pure-strategy Nash equilibria (see e.g.~\cite[\S\,4]{gt}). Thus, finding the saddlepoint of a matrix (as fast as possible) is a natural and fundamental algorithmic question. 

Knuth considered the saddlepoint problem~\cite[\S\,1.3.2]{Knuth1} already in 1968, observing that saddlepoints of $A$ (if they exist) must equal both the minimum of all row-maxima and the maximum of all column-minima, and thus, all saddlepoints of a matrix have the same value. 
Knuth also gave a number of algorithms~\cite[pg.~512--515]{Knuth1} for finding saddlepoints (or reporting their absence). The runtimes of these algorithms may differ significantly on concrete instances, yet in the worst case they all perform $\Theta(n^2)$ operations on an $n \times n$ square input matrix $A$.\footnote{A running time of the form $\OO{n\cdot f(n)}$ for $n \times n$ input matrices also implies a running time of $\OO{m\cdot f(n)}$ for $m \times n$ or $n \times m$ input matrices with $m \geq n$, as shown in~\S\,\ref{sec2}.} In fact, such a runtime is necessary, as in the worst case all entries of $A$ must be inspected; this can be seen through a simple adversary argument, as shown by Llewellyn, Tovey, and Trick~\cite{Llewellyn1988}.

The situation changes considerably if we require the saddlepoint to be the \emph{strict maximum} in its row and the \emph{strict minimum} in its column: we refer to such an entry as a \emph{strict saddlepoint}. Note that we cannot transform the first problem into the second through a simple perturbation: adding noise to a matrix with non-strict saddlepoints may well create a matrix with no saddlepoints at all! It is not hard to see (e.g.\ by the above observation of Knuth), that at most one entry of a matrix can be a strict saddlepoint.

The first nontrivial algorithm for finding a strict saddlepoint was given by Llewellyn, Tovey, and Trick~\cite{Llewellyn1988} in 1988, with a runtime of $\OO{n^{\lg{3}}} \subset \OO{n^{1.59}}$, which they conjectured to be essentially optimal\footnote{Throughout the paper, $\lg{x}$ denotes the base-2 logarithm of $x$.}, see also~\cite{Hedet}. This conjecture turned out to be false, and an algorithm with runtime $\OO{n \lg{n}}$ was given by Bienstock, Chung, Fredman, Schäffer, Shor, and Suri~\cite{Bienstock1991} in 1991. Independently and around the same time, Byrne and Vaserstein~\cite{Byrne1991} obtained a similar result.

The bound of $\OO{n\lg{n}}$ on the complexity of the problem is a natural barrier and the algorithms achieving it are surprisingly simple.\footnote{We review the algorithm of Bienstock et al.~in a slightly modified form in \S\,\ref{sec3}.} They are deterministic, operate on the matrix~$A$ only via constant-time comparisons of entries, and need to inspect only $\OO{n}$ (adaptively queried) entries. In its original presentation, the algorithm of Bienstock et al.~involves $\Theta(n)$ operations on a \emph{heap} that holds $\Theta(n)$ keys -- alternatively it can be seen as executing an initial sorting step on $\Theta(n)$ entries that dominates the runtime. 

Algorithms based on such a step clearly cannot avoid making at least $\Theta(n \lg{n})$ comparisons. But is sorting necessary for finding a saddlepoint? Informally, sorting and saddlepoint-finding do seem related, although in a non-obvious way. If the input matrix $A$ does \emph{not} have a strict saddlepoint, then any correct algorithm must certify this (at least implicitly) for each entry $q$ of~$A$. A sufficient certificate for $q$ is a pair $(x,y)$  with $x \geq y$, where $x$ is an entry in the same row as $q$, and $y$ is an entry in the same column as $q$ (one of $x$ and $y$ may be $q$ itself); see Figure~\ref{fig:certificate}. A moment of thought reveals that such a certificate is also necessary: without it, $q$ may still be the strict saddlepoint. 

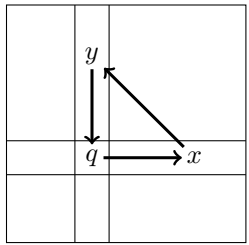
\begin{figure}[htbp]
\centering
	\begin{tikzpicture}[scale=.45]
		\def\n{7}
		\draw (0,0) rectangle (\n,\n) ;
		\foreach \x in {2,3} {
			\draw (\x,0) -- (\x,\n);
		}
		\foreach \y in {2,3} {
			\draw (0,\y) -- (\n,\y);
		}
		\node[inner sep=1.7pt] (q) at (2.5,2.5) {$q$} ;
		\node[inner sep=1.7pt] (x) at (5.5,2.5) {$x$} ;
		\node[inner sep=1.7pt] (y) at (2.5,5.5) {$y$} ;
		\draw[->, line width=0.4mm] (x) to (y) ; 
		\draw[->, line width=0.4mm] (q) to (x) ; 
		\draw[->, line width=0.4mm] (y) to (q) ; 
	\end{tikzpicture}
	\caption{
		Certificate against $q$ being a strict saddlepoint. Arrows point from larger to smaller entries, inequalities involving $q$ are strict. 
		The conditions of a saddlepoint and $x\ge y$ imply a cycle.
	}
	\label{fig:certificate}
\end{figure}

The process of collecting such certificates for all $n^2$ row-column pairs (through $\OO{n\lg{n}}$ comparisons) now appears very similar to sorting $n$ items, i.e.\ collecting ``ordering certificates'' for all $\binom{n}{2}$ pairs of items. Given these observations, and the fact that the bound of $\OO{n\lg{n}}$ has not been improved in over three decades, it is natural to conjecture that a ``sorting barrier'' holds, rendering the $\OO{n\lg{n}}$ runtime optimal. 

Surprisingly, this is not the case. In this paper we show that the strict saddlepoint of an $n \times n$ matrix (or a certificate of its absence) can be found in \emph{almost linear}, $\OO{n \lg^*{n}}$ worst-case time, where $\lg^*$ denotes the very slowly growing \emph{iterated logarithm} function. Thus, the runtime is in $o(n \lg\lg\dots \lg{n})$ with the $\lg$ function iterated any fixed number of times.

The result is based on the observation that finding a certain \emph{pseudo-saddlepoint} (PSP) of a matrix is sufficient. A PSP always exists, but may not be unique, and may not correspond to a strict saddlepoint (SSP) or even to a general saddlepoint (SP). However, if $A$ has a SP or SSP, then its \emph{value} equals the values of all PSPs of~$A$. Finding a PSP (through a recursive approach) appears to be easier than finding a SSP directly, and having found a PSP, it is easy to locate a SSP with the same value, or to rule out its existence. The recursion can be bootstrapped starting from the Bienstock et al.\ $\OO{n\lg{n}}$ algorithm, obtaining a sequence of algorithms that eventually yield the following general result.

\begin{restatable}{thm}{restatethma}
\label{thm:main}
 Given an $m\times n$ or $n \times m$ matrix $A$, where $m \geq n$, we can determine whether $A$ has a strict saddlepoint, and report such an entry, in $\OO{m \lg^*{n}}$ time.  
\end{restatable}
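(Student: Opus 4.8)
The plan is to set up a recursive scheme that, on an $n \times n$ matrix, reduces the problem to a smaller instance whose "cost per row" has been amortized down, and iterate this reduction. The starting point is the $\OO{n \lg n}$ algorithm of Bienstock et al. (reviewed in \S\,\ref{sec3}); the goal is to bootstrap it into a sequence of algorithms $\mathcal{A}_1, \mathcal{A}_2, \ldots$ where $\mathcal{A}_k$ runs in $\OO{n \cdot t_k(n)}$ time with $t_1(n) = \lg n$ and $t_{k+1}(n) \approx \lg t_k(n)$ (plus a constant), so that after $\OO{\lg^* n}$ levels we reach $t_k(n) = \OO{1}$, giving the claimed $\OO{n \lg^* n}$ bound. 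The recursion is driven by the notion of a \emph{pseudo-saddlepoint} (PSP): we first locate a PSP, which is cheaper than finding a strict saddlepoint directly, and then use it to either pin down a strict saddlepoint with the same value or certify that none exists.

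First I would handle the reduction from the rectangular case to the square case: given an $m \times n$ matrix with $m \geq n$, partition the $m$ rows into $\OO{m/n}$ blocks of $n$ rows each and argue that a strict saddlepoint must lie in a block that "survives" a cheap filtering step, bringing us down to $\OO{n}$ candidate rows at total cost $\OO{m}$ — this is the content of the footnote's claim, presumably proved in \S\,\ref{sec2}. Then on a square $n \times n$ instance, the heart of the argument is the recursive step: split the matrix (say, by rows and columns) into subproblems of size roughly $n / s$ for a suitable shrink factor $s$, recursively compute a PSP (or relaxed saddlepoint value) for each, and then combine these $\OO{s^2}$ local answers using a \emph{feasibility test} inspired by searching in sorted matrices — given a candidate value $v$, one can in $\OO{n}$ comparisons decide whether the matrix has a saddlepoint-like entry of value $v$, and this test lets us binary-search or otherwise narrow down among the candidate values produced by the recursion. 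The key invariant to maintain is that the set of surviving candidate values always contains the common value of all PSPs (hence the value of any saddlepoint, if one exists), so correctness is preserved across levels.

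The main obstacle, I expect, is making the recursion \emph{self-improving} rather than merely recursive: a naive divide-and-conquer on an $\OO{n \lg n}$ base algorithm only gives $\OO{n \lg n}$ again, since the recursion tree has $\Theta(\lg n)$ depth. The trick must be that each level of the outer iteration uses the \emph{previous} algorithm $\mathcal{A}_k$ (not $\mathcal{A}_1$) as the subroutine on subproblems whose size is chosen so that $\mathcal{A}_k$'s per-row cost $t_k$ collapses — e.g., running $\mathcal{A}_k$ on blocks of size $t_k(n)$ makes its contribution $\OO{n \cdot t_k(t_k(n))} = \OO{n \cdot t_k(n)}$ per such call but there are few enough calls, or more precisely the block size is chosen so that $t_k$ evaluated on it is $\OO{1}$, and then the only genuine $t_k$-dependence left comes from a single top-level feasibility-search phase costing $\OO{n \lg t_k(n)}$. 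Nailing down exactly which quantities are recursed on, and verifying that the PSP value computed on a coarse partition really does determine (via the $\OO{n}$ feasibility test) the PSP/saddlepoint value of the whole matrix, is the delicate part; once that structural lemma is in place, the runtime recurrence $t_{k+1}(n) = \OO{1} + \lg t_k(n)$ and the $\OO{\lg^* n}$ iteration count follow by a routine induction, and choosing $k = \lg^* n$ (so that the algorithm is uniform) completes the proof.
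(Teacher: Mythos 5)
The core gap is in your combine step. You propose to split the matrix into $s^2$ blocks of side $n/s$, recursively compute a PSP \emph{for each}, and then reconcile the $\OO{s^2}$ candidate values by binary-searching with the $\OO{n}$-time feasibility test. Both halves of this fail to give the claimed bound: computing all $s^2$ block PSPs costs at least $s^2\cdot(n/s)=sn$ per level (the recursion $T(n)=s^2\,T(n/s)+\OO{n s}$ is dominated by its leaves and degenerates to $\Theta(n^2)$), and a binary search over even $s$ candidate values at $\OO{n}$ per test already costs $\OO{n\lg s}=\OO{n\lg n}$ at the top level, reinstating the sorting barrier you are trying to beat. The paper's mechanism is different: it forms (conceptually) the $\ceil{n/\ell}\times\ceil{n/\ell}$ matrix $A'$ whose entries are PSPs of the blocks, proves the structural lemma that \emph{every PSP of $A'$ is a PSP of $A$} (exactly the "delicate part" you defer --- this is the load-bearing correctness lemma), and runs the $\OO{N\lg N}$ baseline PSP algorithm on $A'$, computing a block PSP \emph{lazily, only when the baseline actually queries that entry}. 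Since the baseline queries at most $2\ceil{n/\ell}-1$ entries, each level costs $\OO{n}$ plus that many recursive calls; the feasibility test is used only once, at the very end, to convert the PSP value into the strict saddlepoint or a certificate of its absence (using the fact that if any saddlepoint exists, all PSPs share its value).

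Even after repairing the combine to this lazy form, you only reach the paper's intermediate bound $\OO{n\cdot 2^{\lg^* n}}$, not $\OO{n\lg^* n}$: with $\ell=\ceil{\lg n}$ the recurrence is $T(n)\le cn+\bigl(2\ceil{n/\ell}-1\bigr)T(\ell)$, and the factor $2$ compounds over the $\lg^* n$ levels (in your bootstrapping notation the true recurrence is $t_{k+1}(n)=\OO{1}+2\,t_k(\lg n)$, not $t_{k+1}\approx\lg t_k$, and it unrolls to $2^{\Theta(k)}$). The stated theorem requires one further idea absent from your proposal: a linear-time preprocessing that selects the median of the antidiagonal, partitions the antidiagonal around it by row/column swaps, copies that median onto half of the main diagonal, and recurses on the lower-right quadrant; afterwards all but $\lg n+\OO{1}$ of the fixed diagonal blocks have a constant diagonal, hence a free PSP, so only about $\ceil{n/\ell}-1$ adaptive recursive calls remain per level and the recurrence solves to $\OO{n\lg^* n}$. (Your rectangular-to-square step is also slightly off --- the paper reduces the PSP computation itself, returning the minimum-value PSP among $\ceil{m/n}$ square blocks, rather than filtering blocks for the strict saddlepoint --- but that is minor compared with the two issues above.)
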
 

Our approach is deterministic, simple (despite the subtle runtime bound)
and operates on the input matrix $A$ only via queries and comparisons between entries. The result is close to optimal: to find the strict saddlepoint, its row and column must be fully queried, yielding an $m+n-1$ lower bound on the number of operations. Whether an algorithm (perhaps randomized) with a linear  runtime exists is an intriguing open question. 

The result is also relevant to the general saddlepoint problem. Locating a (non-strict) saddlepoint, or even deciding if one exists, are subject to the $\Omega(n^2)$ lower bound of Llewellyn et al.~\cite{Llewellyn1988}. However, if it is known that a saddlepoint exists, our algorithm can compute its \emph{value} in the time given in Theorem~\ref{thm:main}. 

After some preliminaries in \S\,\ref{sec2}, we introduce our main algorithm (proving Theorem~\ref{thm:main}) in successive steps, in \S\,\ref{sec3}, \S\,\ref{sec4}, and \S\,\ref{sec_imp}. 
 In \S\,\ref{sec5} we also describe an alternative (perhaps even simpler) approach that already improves upon the result of Bienstock et al., yielding a runtime of $\OO{m\lg\lg{n}}$ for $m \times n$ or $n \times m$ matrices, with $m\geq n$. 

\subparagraph*{Further related work.} Hofri and Jacquet~\cite{HofriJacquet} study the complexity of Knuth's algorithms for matrices with distinct entries that are \emph{randomly permuted}, and Hofri~\cite{Hofri} also studies the distribution of the saddlepoint value and the probability of its existence in matrices with random entries; see also Knuth~\cite[\S\,1.3.2]{Knuth1}. 

\section{Preliminaries} \label{sec2}

Denote $[n] = \{1,\dots,n\}$ and $[a,b] = \{a,a+1,\dots,b\}$. 
\if 0
The iterated logarithm $\log_b^*{n}$ (in any base $b \geq 2$) is defined as 
\begin{align*}
\log_b^*{n}=
\begin{cases}
0 & \mathrm{~if~} n \leq 1, \mathrm{~and~}\\ 1+\log_b^*{(\log_b{n})} & \mathrm{~if~} n>1.
\end{cases}
\end{align*}
\fi
The binary iterated logarithm $\lg^*{n}$ is defined as 
\begin{align*}
\lg^*{n}=
\begin{cases}
0 & \mathrm{~if~} n \leq 1, \mathrm{~and~}\\ 1+\lg^*{(\lg{n})} & \mathrm{~if~} n>1.
\end{cases}
\end{align*}

Let $A$ be a matrix with $m$ rows and $n$ columns, and let $a_{i,j}$ be the entry of $A$ in row $i$ and column $j$, for all $i \in [m]$ and $j \in [n]$. 

A \emph{saddlepoint} (SP) of $A$ is an entry $a_{i,j}$ such that $a_{i,j} \geq a_{i,k}$ for all $k \in [n] - \{j\}$, and $a_{i,j} \leq {a_{k,j}}$, for all $k \in [m] - \{i\}$. In words, $a_{i,j}$ is the maximum in its row, and the  minimum in its column. If $a_{i,j}$ is a SP with all inequalities being strict, then we call $a_{i,j}$ a \emph{strict saddlepoint} (SSP) of $A$.

Note that the only assumption we make on the entries of the matrix $A$ is that they are from an ordered set, allowing constant-time pairwise comparisons. In particular, we do not require matrix entries to be pairwise distinct.

Not all matrices admit a SSP  or SP (take, for instance, the identity matrix of size $n > 1$), but if it exists, the SSP must be unique. Indeed, suppose that $a_{i,j}$ and $a_{i',j'}$ are two distinct SSPs. The cases $i = i'$ or $j = j'$ result in an immediate contradiction. Otherwise, from the definition of SSP, $a_{i,j} > a_{i,j'} > a_{i',j'} > a_{i',j} > a_{i,j}$, again, a contradiction.\footnote{Note that while at most one entry can be the SSP, its value may appear multiple times in the matrix.}


\subparagraph*{Pseudo-saddlepoint.} We next define a concept that is essential for our algorithms. 

Let $a_{i,r(i)}$ denote a maximum of row $i$ in $A$, i.e.\ $a_{i,r(i)} = \max\{a_{i,1}, \dots, a_{i,n}\}$ for all $i \in [m]$, and let $a_{c(j),j}$ denote a minimum of column $j$ in $A$, i.e.\ $a_{c(j),j} = \min\{a_{1,j}, \dots, a_{m,j}\}$, for all $j \in [n]$ (breaking ties arbitrarily). 

A \emph{pseudo-saddlepoint} (PSP) of $A$ is an entry $a_{i,j} = v$ of $A$ such that every row of $A$ has an entry larger or equal to $v$ and every column of $A$ has an entry smaller or equal to $v$. Equivalently, $a_{i,j} \leq a_{k,r(k)}$ for all $k \in [m]$, and $a_{i,j} \geq a_{c(k),k}$ for all $k \in [n]$. It follows that PSPs are exactly the entries with value in $[C,R]$, where $C$ is the \emph{maximum} of the column-minima $a_{c(k),k}$ and $R$ is the \emph{minimum} of the row-maxima $a_{k,r(k)}$.

\begin{observation}
    Every matrix has at least one PSP.
\end{observation}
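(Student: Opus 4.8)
The plan is to exhibit an explicit entry of $A$ that is a PSP, rather than argue abstractly. Recall that the PSPs of $A$ are exactly the entries whose value lies in $[C,R]$, where $C=\max_{k}a_{c(k),k}$ is the largest column-minimum and $R=\min_{k}a_{k,r(k)}$ is the smallest row-maximum. A first instinct is to invoke the classical max--min inequality $C\le R$ to conclude that $[C,R]$ is nonempty; but nonemptiness of the interval by itself does not yet hand us an \emph{entry} of $A$ inside it. The key observation is that $R$, being the minimum of the row-maxima, \emph{is} itself an entry of $A$, and it is always a PSP.

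Concretely, I would first fix a row $k^{\ast}$ attaining the minimum of the row-maxima, so that $v := a_{k^{\ast},r(k^{\ast})} = R$, and then verify the two defining conditions of a PSP for this entry. For the row condition: every row $k$ contains the entry $a_{k,r(k)}$, and $a_{k,r(k)}\ge v$ by the choice of $k^{\ast}$ as a minimizer of the row-maxima. For the column condition: every column $j$ contains the entry $a_{k^{\ast},j}$ lying in row $k^{\ast}$, and $a_{k^{\ast},j}\le a_{k^{\ast},r(k^{\ast})}=v$ because $a_{k^{\ast},r(k^{\ast})}$ is a maximum of row $k^{\ast}$. Hence $v=a_{k^{\ast},r(k^{\ast})}$ satisfies both requirements and is a PSP. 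Symmetrically, picking a column $j^{\ast}$ attaining the largest column-minimum shows that $C=a_{c(j^{\ast}),j^{\ast}}$ is always a PSP as well.

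I do not expect any genuine obstacle here: the statement is essentially the max--min inequality combined with the trivial remark that the extremal value $R$ (or $C$) is realized by an actual matrix entry. The only thing to be careful about is not to over-complicate the argument — one should avoid routing through the interval $[C,R]$ and instead name the witness directly, since the column-condition check above reproves $C\le R$ as a byproduct and makes the two-line proof self-contained.
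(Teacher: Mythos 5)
Your proof is correct and rests on the same core fact as the paper's: the max--min inequality $C\le R$, established via a row-$k^\ast$ (crossing-entry) comparison. The paper phrases it as showing $C\le R$ and then invoking the characterization of PSPs as entries with value in $[C,R]$ (with $C$ and $R$ themselves being matrix entries), whereas you verify the PSP definition directly for the entry realizing $R$ --- a purely presentational difference, not a different argument.
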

\begin{proof}
    It is enough to show $C \leq R$. Indeed, let $C = a_{c(j),j}$ and $R = a_{i,r(i)}$.
    Then, since $C$ ($R$) is the minimum (maximum) of its respective column (row), $a_{c(j),j} \leq a_{i,j} \leq a_{i,r(i)}$.
\end{proof}


The example $\Bigl(\begin{smallmatrix}
0 & 7 & 5\\
6 & 4 & 2\\
3 & 1 & 8
\end{smallmatrix}\Bigr)$
shows that PSPs of a matrix may have different values (here, $C=2$ and $R=6$, so all entries with value in $[2,6]$ are PSPs, but the matrix admits no SSP, and in fact, no SP). The existence of a saddlepoint (strict or not), however, determines the value of all PSPs, as we show next.

\begin{lemma}\label{lemma:strict_is_pseudo}
    If a matrix $A$ has a SP of value $s$, then every PSP of $A$ has value $s$.
\end{lemma}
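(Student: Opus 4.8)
The plan is to show both $C \le s$ and $s \le R$, where $C$ is the maximum of the column-minima and $R$ is the minimum of the row-maxima, since then every PSP (being an entry with value in $[C,R]$) has value $s$ if and only if $C = R = s$. Actually, the cleaner route is: first establish that $C \le s \le R$, and second show that in fact $R \le s$ as well (and symmetrically $s \le C$), forcing $C = R = s$.

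Here is how I would carry out the steps. Let $a_{i,j} = s$ be the given saddlepoint of $A$. First I would argue $s \le R$: for every row $k$, I claim the row-maximum $a_{k,r(k)}$ is at least $s$. Indeed, $a_{k,r(k)} \ge a_{k,j}$ since $a_{k,r(k)}$ is the maximum of row $k$, and $a_{k,j} \ge a_{i,j} = s$ because $a_{i,j}$ is the minimum of column $j$. Hence $a_{k,r(k)} \ge s$ for all $k$, so $R = \min_k a_{k,r(k)} \ge s$. Symmetrically, for every column $k$, the column-minimum $a_{c(k),k}$ satisfies $a_{c(k),k} \le a_{i,k} \le a_{i,j} = s$, using that $a_{i,j}$ is the maximum of row $i$; hence $C = \max_k a_{c(k),k} \le s$. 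So far this just recovers $C \le s \le R$.

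The key remaining step is to pin down $R \le s$ (and $C \ge s$), which combined with the above gives $C = R = s$. For this, note that $s = a_{i,j}$ is itself the maximum of row $i$, so $a_{i,r(i)} = s$ (or at least has value $s$), which immediately gives $R = \min_k a_{k,r(k)} \le a_{i,r(i)} = s$. Symmetrically, $s = a_{i,j}$ is the minimum of column $j$, so $a_{c(j),j}$ has value $s$, giving $C = \max_k a_{c(k),k} \ge a_{c(j),j} = s$. Combining, $s \le C \le R \le s$, so $C = R = s$. Since the PSPs are exactly the entries with value in $[C, R] = \{s\}$, every PSP has value $s$, as claimed.

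I do not anticipate a genuine obstacle here: the whole argument is a short chain of inequalities exploiting that a saddlepoint is simultaneously a row-maximum and a column-minimum. The only mild subtlety is bookkeeping around ties — the functions $r(\cdot)$ and $c(\cdot)$ break ties arbitrarily, so one should phrase things in terms of \emph{values} rather than identifying specific cells (e.g.\ $a_{i,r(i)}$ has the same value as $a_{i,j}$ even if $r(i) \ne j$), but this does not affect any inequality above. Note also the statement only requires a (possibly non-strict) SP, and indeed strictness is never used.
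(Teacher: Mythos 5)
Your proof is correct and uses essentially the same argument as the paper: the inequalities $C \geq a_{c(j),j} \geq s$ and $R \leq a_{i,r(i)} \leq s$, which force every PSP value to equal $s$. Your first block (showing $C \leq s \leq R$) is redundant for the stated claim, since $C \leq R$ always holds and the conclusion only needs $s \leq C$ and $R \leq s$, but it does no harm.
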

\begin{proof}
    Let $a_{i,j} = s$ be a SP of $A$. Then, 
    $s \leq a_{c(j),j} \leq C$, by the definition of a SP and by the fact that $C$ is the maximum of the column-minima. Thus, 
    no value $v<s$ can correspond to a PSP. Symmetrically, $R \leq a_{i,r(i)} \leq s$, so no value $v > s$ can correspond to a PSP.
\end{proof}

\subparagraph*{Testing for a strict saddlepoint.} The next ingredient of our approach is a feasibility test.

\begin{lemma}\label{lem:feas}
Given an $m \times n$ matrix $A$ and a value $s$, we can find, in time $O(m+n)$, a SSP of $A$ of value $s$, or report that no such SSP exists. 
\end{lemma}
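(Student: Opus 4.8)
The plan is to reduce the problem to two steps: (i) locate a single \emph{candidate} cell $q$ such that, if a SSP of value $s$ exists, it must be $q$; and (ii) verify $q$ by reading all of its row and all of its column and checking directly that $a_q = s$, that $a_q$ is the strict maximum of its row, and that it is the strict minimum of its column. Step (ii) is obviously $O(m+n)$, and by the candidate property a failed verification certifies that no SSP of value $s$ exists; so the whole difficulty is concentrated in Step (i).

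For Step (i) I would exploit the structure of a strict saddlepoint: if a SSP of value $s$ sits at $(i^*, j^*)$, then $a_{i^*,k} < s$ for all $k \ne j^*$ and $a_{k,j^*} > s$ for all $k \ne i^*$; in particular every row other than $i^*$ contains an entry $> s$, and every column other than $j^*$ contains an entry $< s$. Hence I maintain a pool of ``alive'' rows and columns (all alive initially) and, whenever a probed entry $a_{i,j}$ satisfies $a_{i,j} < s$, I discard column $j$ (it has a small entry, so it cannot be the SSP column), and whenever $a_{i,j} > s$ I discard row $i$; if the pool of alive rows or of alive columns ever empties, the answer is ``no SSP of value $s$''. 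Proceeding celebrity-problem-style --- repeatedly probing the intersection of some alive row and some alive column and then moving to a fresh alive row or column --- each such probe discards one row or one column, so there are only $O(m+n)$ of them.

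The subtle case, and the step I expect to be the main obstacle, is $a_{i,j} = s$, which discards nothing by itself: here $(i,j)$ is a \emph{possible} SSP, and one checks that if it is not the SSP then both row $i$ and column $j$ are unfit --- yet we may not discard them for free, at the risk of throwing away the real SSP. The fix is to ``dig'' with a \emph{charged} scan: scan row $i$ across the alive columns $k \ne j$, discarding each column with $a_{i,k} < s$; if some scanned column has $a_{i,k} \ge s$, stop --- row $i$ then has two entries that are $\ge s$, hence cannot be the SSP row, so discard it; the cost of this scan is one plus the number of columns it discards, which we charge to the discarded row and columns, totalling $O(m+n)$ over all such digs. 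Otherwise every alive column $k \ne j$ has $a_{i,k} < s$; discard them all, so $j$ becomes the unique alive column --- this ``non-stopping'' dig can happen at most once and costs $O(n)$. Once only one alive column $j$ remains, a probe $a_{i,j} = s$ is handled symmetrically by scanning column $j$ across the alive rows: this either exposes a second entry $\le s$ of column $j$ (so no alive column is fit --- answer ``no SSP'') or discards all alive rows but $i$, leaving the single cell $(i,j)$ as $q$; this terminal dig costs $O(m)$ and happens once. The process thus ends with at most one alive cell, the candidate $q$ for Step (ii).

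The one thing that really needs care is precisely this treatment of entries equal to $s$ together with the amortized accounting: a naive ``fully verify $(i,j)$, then move on'' would cost $\Theta(m+n)$ per encountered $s$-entry and up to $\Theta(\min(m,n))$ of them may occur, which is too slow; the charged-scan argument is what keeps the total running time at $O(m+n)$.
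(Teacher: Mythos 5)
Your proposal is correct, but it proves the lemma by a genuinely different argument than the paper. You prune a pool of alive rows and columns: a probe $a_{i,j}<s$ kills column $j$, a probe $a_{i,j}>s$ kills row $i$ (both justified, since the SSP's row contains no entry $>s$ and its column no entry $<s$), and the delicate case $a_{i,j}=s$ is handled by a charged scan whose cost is paid for by the rows/columns it discards, plus at most two one-off scans of cost $O(n)$ and $O(m)$; the invariant that the SSP's row and column are never discarded makes every ``no SSP'' verdict sound, and the final full verification of the single surviving cell rules out false positives, so both correctness and the $O(m+n)$ bound go through. The paper instead runs two monotone staircase (``saddleback'') searches from $(1,1)$ that differ only in how they break ties at entries equal to $s$: the exit row of one and the exit column of the other determine the unique candidate, which is then verified in $O(m+n)$ time. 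The paper's route needs no amortization or alive-set bookkeeping, and its failure pattern immediately yields the directional information of Observation~\ref{obs:search} (the SSP, if it exists, is $>s$ or $<s$), which is reused as a parametric-search primitive in \S\,\ref{sec5}; your elimination scheme establishes the lemma as stated with somewhat heavier accounting, and extracting those ``greater/smaller than $s$'' verdicts from it would require an extra argument (e.g.\ rows killed because they contain two entries $\geq s$ do not directly bound the SSP value). For the lemma itself, though, your approach is a valid and self-contained alternative.
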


\begin{proof}
We perform two searches, both starting at location $(1,1)$ of the matrix and proceeding in a monotone staircase fashion as follows. We call the first the \emph{horizontal search}, and the second the \emph{vertical search}, illustrated in Figure~\ref{fig:staircase}.

\begin{figure}[tbh]
\centering
\includegraphics[scale=0.34]{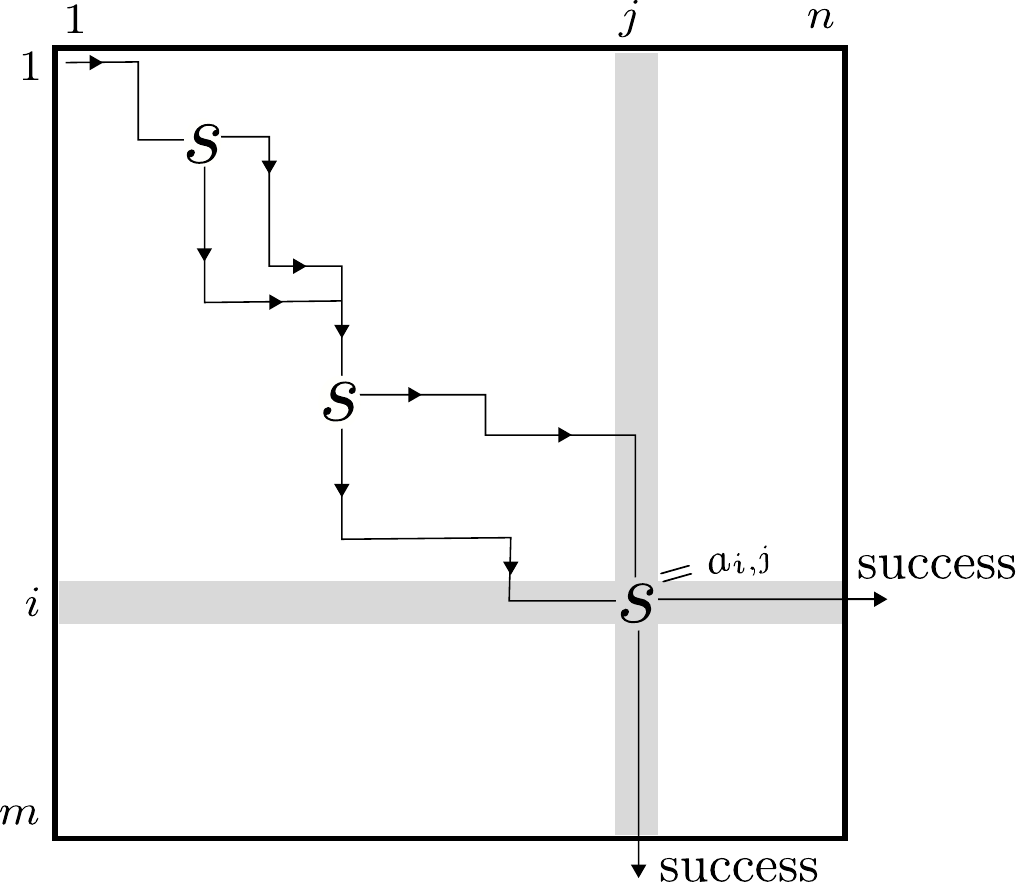}
\caption{\label{fig:staircase}Horizontal and vertical searches in the proof of Lemma~\ref{lem:feas}, for value $s$, with SSP $a_{i,j} = s$. Observe that the two paths can meet at arbitrary entries, but diverge only at entries with value $s$. After finding the SSP, the two searches only proceed horizontally, resp.\ vertically.}
\end{figure}

\emph{Horizontal search}: Suppose we are at $(i,j)$. If $i \in [m]$ and $j\in [n]$ then let $q = a_{i,j}$:

\noindent If $s < q$, set $i\gets i+1$, and continue the search. Otherwise, set $j \gets j+1$, and continue the search. If $i>m$, halt with \emph{failure}, if $j>n$, halt with \emph{success}.

\emph{Vertical search}: Suppose we are at $(i,j)$. If $i \in [m]$ and $j\in [n]$ then let $q = a_{i,j}$:

\noindent If $s \leq q$, set $i\gets i+1$, and continue the search. Otherwise, set $j \gets j+1$, and continue the search. If $i>m$, halt with \emph{success}, if $j>n$, halt with \emph{failure}.

Observe that the two searches differ only in the tie-breaking in the case $s=q$ and whether they report success on exiting on the horizontal or vertical end of the matrix.

If any of the two searches halt with failure, return \emph{false}. Otherwise, suppose the horizontal search exited at $(i,n+1)$ and the vertical search exited at $(m+1,j)$. Then, compare $a_{i,j}$ with all other entries in row $i$ and in column $j$, returning \emph{true} if $a_{i,j}$ is a SSP, and otherwise returning \emph{false}. 

Clearly, the runtimes of both searches and the final verification are $\OO{m+n}$. 
It remains to show, towards establishing Lemma~\ref{lem:feas}, that the algorithm correctly identifies the SSP.

Due to the final verification step, the algorithm cannot falsely report a SSP. 
So suppose that there is a SSP $a_{i,j}$ of $A$. Then, both searches eventually reach an entry $(i,j')$ with $j' \leq j$ or an entry $(i',j)$ with $i' \leq i$. From then on, the searches proceed directly to $a_{i,j}$. The horizontal search then executes only $j \gets j+1$ steps, eventually returning success, and the vertical search executes only $i \gets i+1$ steps, eventually returning success. As the verification succeeds, the algorithm correctly identifies $a_{i,j}$ as the SSP.
\end{proof}

We remark that in the algorithm of Lemma~\ref{lem:feas}, when searching for a value $s$, the horizontal and vertical searches cannot both fail. Indeed, if the horizontal search exits ``at the bottom'' (at $(m+1,j)$) and the vertical search exits ``to the right'' (at $(i,n+1)$), then the two paths must diverge at an entry of value $s$ ``in the wrong way'', i.e.\ the horizontal search moving vertically, and the vertical search moving horizontally, which is impossible. By distinguishing the other cases, we can turn the test into a parametric search tool, that will be useful in \S\,\ref{sec5}.

\begin{observation}\label{obs:search}
If in the algorithm of Lemma~\ref{lem:feas}, when searching for a value $s$, the horizontal search fails, then the SSP (if exists) must be greater than $s$. If the vertical search fails, then the SSP (if exists) must be smaller than $s$. If both searches succeed and the final test fails, then the matrix has no SSP.
\end{observation}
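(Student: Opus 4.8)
The plan is to analyze the two monotone search paths of Lemma~\ref{lem:feas} directly, exploiting the defining properties of a strict saddlepoint. Throughout, let $a_{p,q} = t$ denote the (unique) SSP of $A$, whenever one exists. Both searches start at $(1,1)$ and at each step either move down (incrementing the row) or right (incrementing the column), so each of them traces a monotone lattice path; consequently, if such a path eventually leaves the matrix at the bottom it must have visited every row $1, \dots, m$, and if it leaves at the right it must have visited every column $1, \dots, n$. The single elementary fact I will use repeatedly is the following: if a search path exits at the bottom, then there is a cell on it lying in row $p$ from which the path moves \emph{down} (this holds even when $p = m$, since leaving the matrix at the bottom is itself such a down-move); symmetrically, if a search path exits at the right, there is a cell on it lying in column $q$ from which the path moves \emph{right}.

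For the first claim I would argue the contrapositive: if an SSP exists with $t \le s$, then the horizontal search succeeds. Suppose not; then it exits at the bottom, so it moves down from some cell $(p,c')$ in row $p$, and by the horizontal search rule a down-move at $(p,c')$ requires $s < a_{p,c'}$. But $t$ is the \emph{strict} maximum of row $p$, so $a_{p,c'} \le t$ (with equality only if $c' = q$), hence $a_{p,c'} \le t \le s$, contradicting $s < a_{p,c'}$. The second claim is the mirror image: if an SSP exists with $t \ge s$, then the vertical search succeeds, since otherwise it exits at the right, moving right from some cell $(r',q)$ in column $q$; the vertical search rule makes this require $s > a_{r',q}$, while $t$ being the strict \emph{minimum} of column $q$ gives $a_{r',q} \ge t \ge s$, again a contradiction.

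For the third claim I would first record two further statements obtained by the same template: if an SSP exists with $t < s$ then the vertical search \emph{fails} (otherwise it exits at the bottom, moving down from some $(p,c')$ in row $p$, which forces $s \le a_{p,c'} \le t < s$), and if an SSP exists with $t > s$ then the horizontal search \emph{fails} (otherwise it exits at the right, moving right from some $(r',q)$ in column $q$, which forces $s \ge a_{r',q} \ge t > s$). Now assume that both searches succeed, that the final verification fails, and, for the sake of contradiction, that an SSP $a_{p,q} = t$ exists. The two statements just recorded exclude both $t < s$ and $t > s$, so $t = s$; but then we are precisely in the setting of Lemma~\ref{lem:feas}, searching for value $s$ with an SSP of value $s$ present, and the proof of that lemma shows that in this case the final verification succeeds. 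This contradicts the assumption, so no SSP exists.

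None of these arguments is long or delicate, so I do not anticipate a real obstacle; the only points genuinely requiring care are the tie-breaking case where the current entry equals $s$ --- it is exactly the \emph{strict} inequalities in the definition of an SSP (rather than the weak ones of a general saddlepoint) that make each of the above contradictions close, which is why the same reasoning cannot be pushed through for non-strict saddlepoints --- and the degenerate situations $p = m$ and $q = n$, where one should verify that the relevant path indeed performs a down-move out of row $p$, resp.\ a right-move out of column $q$, before leaving the matrix; both follow from the elementary fact stated above.
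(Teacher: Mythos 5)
Your proof is correct and follows essentially the same route as the paper's: both rest on the fact that a failing (or succeeding) monotone search makes a down-move only at entries exceeding $s$ (resp.\ a right-move only at entries at most $s$), combined with the strict row-max/column-min property of the SSP, and both reduce the third case to ``the SSP could only have value $s$'' and then appeal to the verification argument of Lemma~\ref{lem:feas}. The only cosmetic difference is that you argue contrapositively and track just the SSP's own row and column, whereas the paper notes that the failed search certifies an entry $>s$ (resp.\ $<s$) in \emph{every} row (resp.\ column); the content is the same.
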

\begin{proof}
If the horizontal search fails, then it has identified in each row an entry of value greater than $s$, thus the SSP must be greater than $s$. The second case is symmetric. In the third case, we learn that each row has an entry of value at least $s$ and each column has an entry of value at most $s$. Thus, the SSP could only have value $s$.\footnote{We also mention the somewhat counter-intuitive fact that for every matrix there is exactly one value for which both searches succeed (regardless of whether the matrix has a SSP or SP), and the procedure can be seen as searching for this value. As we lack an immediate use for this fact, we omit the (easy) proof.} 
\end{proof}

\subparagraph*{Reduction to square matrices.}
We briefly argue that when computing PSPs, it is sufficient to focus on square matrices. Suppose, more generally, that the input matrix $A$ is of size $m \times n$. Assume, w.l.o.g.\ that $m \geq n$, otherwise let $A = -A^{T}$, without affecting the structure of SSP or PSPs. (Explicitly transposing and negating $A$ would of course be too costly, but we can do this ``on demand'', only for the queried entries.) 

We reduce the computation of a PSP of $A$ to computations on square matrices (similar arguments were used by Llewellyn et al.\ \cite{Llewellyn1988} and Bienstock et al.\ \cite{Bienstock1991} for the SSP).
\begin{lemma}\label{lemma:general_to_square}
    Computing a PSP of an $m \times n$ matrix with $m > n$ can be reduced to the computation of PSPs of $\left\lceil m/n \right\rceil$ matrices of size $n\times n$, with an additive overhead of $\OO{m/n}$ on the runtime.
\end{lemma}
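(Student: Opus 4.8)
The plan is to partition the rows of $A$ into $k := \ceil{m/n}$ blocks, each consisting of exactly $n$ rows together with all $n$ columns, so that each block is an $n \times n$ matrix $B_1, \dots, B_k$; when $n$ does not divide $m$ we let the last block overlap the previous one (equivalently, pad it with a copy of one of its rows — this changes neither the multiset of row-maxima nor the column-minima, hence neither $C$ nor $R$, hence not the set of PSP values). This partition is maintained only implicitly: a query to entry $(i,j)$ of $B_\ell$ is translated in $O(1)$ time into a query to the corresponding entry of $A$. We then invoke the assumed PSP subroutine on each $B_\ell$ to obtain a PSP $p_\ell$, and we output the $p_\ell$ of \emph{minimum value}. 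The claim to prove is that this entry, viewed as an entry of $A$, is a PSP of $A$.

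For the correctness proof I will use the characterization from the preliminaries: the PSP values of a matrix $M$ are exactly those in $[C(M), R(M)]$, where $C(M)$ is the maximum of the column-minima of $M$ and $R(M)$ is the minimum of the row-maxima of $M$. Two monotonicity observations drive the argument. First, restricting to a subset of the rows can only \emph{increase} each column-minimum, so $C(B_\ell) \ge C(A)$ for every $\ell$; consequently $p_\ell \ge C(B_\ell) \ge C(A)$ for all $\ell$. Second, since each block keeps all $n$ columns, the row-maximum of a row is the same whether computed within its block or within $A$; as the blocks together cover all $m$ rows, this gives $\min_\ell R(B_\ell) = R(A)$. Combining with $p_\ell \le R(B_\ell)$, we get $\min_\ell p_\ell \le \min_\ell R(B_\ell) = R(A)$. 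Hence the minimum-value $p_\ell$ lies in $[C(A), R(A)]$, and being an actual entry of $A$, it is a PSP of $A$.

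Finally, the overhead beyond the $k$ calls to the PSP subroutine is $O(1)$ per block for index bookkeeping plus $O(k)$ comparisons to select the smallest $p_\ell$, i.e.\ $O(m/n)$ total, matching the claimed bound.

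The only genuine subtlety is orienting the two monotonicity inequalities correctly — in particular, realizing that it must be the \emph{minimum} (not the maximum) of the block-PSPs that works, and that this rests on partitioning the \emph{rows} rather than the columns, which is precisely why we first normalized so that $m \ge n$. The case where $n$ does not divide $m$ is a minor technicality handled by the overlap/padding remark above.
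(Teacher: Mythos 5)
Your proposal is correct and follows essentially the same route as the paper: partition the rows into $\lceil m/n\rceil$ possibly overlapping $n\times n$ blocks, take the minimum-value block PSP, and spend $\OO{m/n}$ extra time selecting it. The only cosmetic difference is that you verify the PSP property via the interval characterization $[C,R]$ with the two monotonicity inequalities, whereas the paper argues directly from the definition (every column of a block witnesses an entry $\le v$, and every row of $A$ lies in some block whose PSP is $\ge v$); these are the same argument.
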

\begin{proof}
Let $A$ be an $m\times n$ matrix with $m > n$. Divide $A$ into $\left\lceil m/n \right\rceil$ (possibly overlapping) matrices of size $n\times n$, compute a PSP for each of them, and return the one with minimum value $v$. Because that entry is a PSP of its corresponding matrix $A'$, every column of $A'$ (and thus every column of $A$) has an entry smaller or equal to $v$. Every row of $A$ has a value larger or equal to the computed PSP of one of the smaller matrices containing this row, which is itself larger or equal to $v$. Thus, every row of $A$ contains a value larger or equal to $v$.
\end{proof}

\begin{corollary}\label{cor}
Given an algorithm that finds a PSP in an $n \times n$ matrix in time $\OO{n\cdot f(n)}$ for arbitrary $f(n) \geq 1$, we can compute a PSP in an $m \times n$ or $n \times m$ matrix with $m \geq n$ in time $\OO{m \cdot f(n)}$.
\end{corollary}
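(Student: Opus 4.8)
The plan is to derive the statement as a direct consequence of Lemma~\ref{lemma:general_to_square}, together with the two symmetry reductions already described. First I would dispose of the boundary cases. If $m = n$, the claimed bound $\OO{m \cdot f(n)}$ is exactly the running time of the assumed algorithm, so there is nothing to prove. If the input is $n \times m$ rather than $m \times n$, I would invoke the observation stated just before Lemma~\ref{lemma:general_to_square}: replacing $A$ by $-A^{T}$ turns it into an $m \times n$ matrix while preserving the structure of PSPs, so that a PSP of the transformed matrix maps directly back to a PSP of $A$; since the transposition and negation are performed lazily, only on the entries actually queried, they incur no asymptotic overhead. Thus it suffices to treat an $m \times n$ matrix with $m > n$.

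For this case, Lemma~\ref{lemma:general_to_square} reduces the problem to computing a PSP in each of $\lceil m/n \rceil$ matrices of size $n \times n$, returning the minimum-value one among them, at an additive cost of $\OO{m/n}$. Each $n \times n$ subproblem takes $\OO{n \cdot f(n)}$ time by hypothesis, so the total running time is
\[
  \lceil m/n \rceil \cdot \OO{n \cdot f(n)} + \OO{m/n}.
\]
Since $m > n$, we have $\lceil m/n \rceil < m/n + 1 \le 2m/n$, so the first term is $\OO{(m/n)\cdot n \cdot f(n)} = \OO{m \cdot f(n)}$; and since $f(n) \ge 1$, the additive $\OO{m/n} \le \OO{m}$ term is absorbed into $\OO{m \cdot f(n)}$. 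This establishes the claimed bound, and the entry returned is an actual PSP of $A$ (respectively its image under the transposition in the $n \times m$ case).

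I do not expect any real obstacle: the corollary is essentially bookkeeping on top of Lemma~\ref{lemma:general_to_square}. The one point that needs a little care is that the ceiling $\lceil m/n \rceil$ must not inflate the bound when $m$ only slightly exceeds $n$ (already $m = n+1$ gives $\lceil m/n \rceil = 2$); this is precisely where the hypothesis $m \ge n$, and the resulting estimate $\lceil m/n \rceil = \Theta(m/n)$, is used. One should also make sure the $n \times m$ case is handled via the lazy transposition argument rather than by an explicit — and too costly — transpose.
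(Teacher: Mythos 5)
Your proposal is correct and matches the paper's intended argument: the corollary is exactly the combination of the lazy $-A^{T}$ reduction with Lemma~\ref{lemma:general_to_square}, followed by the arithmetic $\lceil m/n\rceil\cdot \OO{n\cdot f(n)} + \OO{m/n} = \OO{m\cdot f(n)}$ using $m\ge n$ and $f(n)\ge 1$. Nothing is missing.
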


\section{A baseline algorithm for pseudo-saddlepoints} \label{sec3}

In this section we adapt the algorithm of Bienstock et al.\ \cite{Bienstock1991} for finding the SSP, to find a PSP. For completeness, we repeat the analysis, although the modifications needed are minor.

Let $H = \{(r_1,c_1,v_1), (r_2,c_2,v_2), \ldots, (r_q,c_q,v_q)\}$ be a set of $q$ triplets of the form (\textit{row, column, value}), corresponding to entries in the input matrix~$A$, with $v_1\leq v_2\leq \cdots \leq v_q$, and satisfying the following three properties:
\begin{itemize}
    \item P1. $H$ has at most one entry from each row or column of $A$.
    \item P2. Every row which does not appear in $H$ has an entry larger or equal to $v_q$.
    \item P3. Every column which does not appear in $H$ has an entry smaller or equal to $v_1$.
\end{itemize}

\begin{lemma}\label{lem:op}
    Let $(i,j,a_{i,j}) = (r_1,c_1,v_1)$ and $(k,\ell,a_{k,\ell}) = (r_q,c_q,v_q)$ be the elements of $H$ with minimum and maximum value respectively. By querying $a_{i,\ell}$ and doing a constant number of comparisons and insertions/deletions in $H$, we can reduce the size of $H$ by one while preserving properties \emph{P1}, \emph{P2}, and \emph{P3}.
\end{lemma}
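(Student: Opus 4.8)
The plan is to case-split on the outcome of the single comparison of $a_{i,\ell}$ against $v_1$ and $v_q$ (the extreme values currently in $H$), and in each case remove exactly one triplet from $H$, possibly after querying one more entry in its row or column to re-establish the invariant. First observe that $H$ stores one entry from row $i$ (namely the minimum-value triplet $(i,j,v_1)$) and one from column $\ell$ (namely the maximum-value triplet $(k,\ell,v_q)$), and these are distinct triplets since $q\ge 2$ (the $q=1$ case is trivial: a single triplet already satisfies P1--P3 vacuously, and we need not shrink it further, or it is handled separately in the calling algorithm). The entry $a_{i,\ell}$ lies in row $i$ and column $\ell$, so comparing it to $v_1$ tells us something about column $\ell$, and comparing it to $v_q$ tells us something about row $i$.

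\textbf{Case $a_{i,\ell}\le v_1$.} Then column $\ell$ has an entry (namely $a_{i,\ell}$) that is $\le v_1$. Remove the triplet $(k,\ell,v_q)$ from $H$. Properties P1 and P3 are immediate: P1 because we only deleted, and P3 because the newly-uncovered column $\ell$ has the witness $a_{i,\ell}\le v_1$, while all other uncovered columns are unaffected. For P2, the only concern is row $k$, which is now uncovered; but $a_{k,\ell}=v_q$ was the \emph{largest} value in $H$, and after deletion the new maximum value $v_{q-1}$ satisfies $v_{q-1}\le v_q=a_{k,\ell}$, so row $k$ has an entry $\ge v_{q-1}$ — exactly what P2 now requires. (Every other uncovered row already had a witness $\ge v_q\ge v_{q-1}$.) So $H\setminus\{(k,\ell,v_q)\}$ works.

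\textbf{Case $a_{i,\ell}\ge v_q$.} Symmetrically, row $i$ now has the witness $a_{i,\ell}\ge v_q$. Remove the triplet $(i,j,v_1)$. P1 is immediate; P2 holds for the newly-uncovered row $i$ via $a_{i,\ell}\ge v_q$; and for P3 the only concern is the newly-uncovered column $j$, but $a_{i,j}=v_1$ was the smallest value in $H$ and the new minimum $v_2\ge v_1=a_{i,j}$, so column $j$ has an entry $\le v_2$, as P3 now demands.

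\textbf{Case $v_1< a_{i,\ell}< v_q$.} Now $a_{i,\ell}$ is a ``fresh'' value strictly between the current extremes. The idea is to swap it in for one of the two extreme triplets and drop the other. Concretely: delete both $(i,j,v_1)$ and $(k,\ell,v_q)$, and insert the single new triplet $(i,\ell,a_{i,\ell})$. This is a net decrease of one. For P1: after deleting the two old triplets, rows $i$ and $k$ and columns $j$ and $\ell$ are all free; inserting $(i,\ell,\cdot)$ re-occupies row $i$ and column $\ell$ with one triplet, and since no other triplet used row $i$ or column $\ell$, P1 holds. For P3: the only columns that could become problematic are $j$ and $\ell$; column $\ell$ is again covered by $H$, and column $j$ gets the witness $a_{i,j}=v_1$, which is $\le$ the new minimum value of $H$ (that new minimum is $\min(v_2, a_{i,\ell})\ge v_1$). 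For P2: the only rows that could become problematic are $i$ and $k$; row $i$ is again covered, and row $k$ gets the witness $a_{k,\ell}=v_q$, which is $\ge$ the new maximum value of $H$ (that new maximum is $\max(v_{q-1}, a_{i,\ell})\le v_q$). Hence all three properties are restored, and $H$ has shrunk by one. In all three cases we performed one query ($a_{i,\ell}$), $O(1)$ comparisons, and $O(1)$ updates to $H$.

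\textbf{Main obstacle.} The bookkeeping is routine; the one genuinely delicate point is the middle case, where naively deleting a single extreme triplet would leave \emph{two} new deficiencies (an uncovered row \emph{and} an uncovered column) that $a_{i,\ell}$ cannot simultaneously patch as a pure ``witness''. The resolution — folding $a_{i,\ell}$ itself into $H$ as a replacement triplet so that it covers one row/column while its row- and column-mates ($a_{i,j}=v_1$ and $a_{k,\ell}=v_q$) serve as the boundary witnesses — is the crux, and it is precisely why $H$ shrinks by one rather than staying the same size. One should also double-check the edge case $q=2$, where deleting both triplets empties $H$ before the re-insertion; here the new $H=\{(i,\ell,a_{i,\ell})\}$ still has $v_1$ and $v_q$ coinciding, and the witness arguments above go through verbatim with $v_{q-1}$ and $v_2$ simply absent.
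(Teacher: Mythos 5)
Your proposal is correct and follows essentially the same argument as the paper: the same three-way case split on how $a_{i,\ell}$ compares to $v_1$ and $v_q$, deleting the maximum triplet, the minimum triplet, or both (while inserting $(i,\ell,a_{i,\ell})$) respectively, with the same witness arguments for P2 and P3. Your extra attention to the $q=1$ and $q=2$ edge cases is a minor refinement the paper leaves implicit.
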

\begin{proof}
    We know from P1 that $i\neq k$ and $j\neq \ell$. We query $a_{i,\ell}$ and distinguish three cases based on the comparisons with $a_{i,j}$ and $a_{k,\ell}$. Recall that $a_{i,j} = v_1 \le v_q=a_{k,\ell}$.
    
    Case 1: $a_{i,\ell} \leq a_{i,j} \leq a_{k,\ell}$.
    Then row $k$ has an entry larger or equal to $v_{q-1}$ (namely $v_q = a_{k,\ell}$). By P2 and the fact that $v_{q-1}\leq v_q$, every row which does not appear in $H$ also has an entry larger or equal to $v_{q-1}$.  Column $\ell$ has an entry smaller or equal to $v_1$ (namely $a_{i,\ell}$) and every column which does not appear in $H$ also has an entry smaller or equal to $v_1$ (by P3). Thus, we can delete $(r_q,c_q,v_q)$ from $H$ while preserving all three properties.
    
    Case 2: $a_{i,j} \leq a_{k,\ell} \leq a_{i,\ell}$. By a symmetric argument we can delete $(r_1,c_1,v_1)$ from $H$ while preserving all three properties.

    Case 3: $a_{i,j} < a_{i,\ell} < a_{k,\ell}$. Let $u = \min\{v_2,a_{i,\ell}\}$ and $v = \max\{v_{q-1},a_{i,\ell}\}$.
    The row $k$ has an entry larger or equal to $v$ (namely $a_{k,\ell}$). By P2 this is also the case for every row which does not appear in $H$. The column $j$ has an entry smaller or equal to $u$ (namely $a_{i,j}$). By P3 this is also the case for every column which does not appear in $H$. Thus, we can insert $(i,\ell,a_{i,\ell})$ and delete both $(r_1,c_1,v_1)$ and $(r_q,c_q,v_q)$ from $H$ while preserving all three properties.
\end{proof}

Given an $n\times n$ matrix $A$, we start by setting $H=\{(1,1,a_{1,1}), (2,2,a_{2,2}), \ldots, (n,n,a_{n,n})\}$. We then repeatedly apply Lemma~\ref{lem:op} while maintaining $H$ as a heap or a dynamically balanced binary search tree ordered by entry values (to guarantee $\OO{\log n}$ time per application of the lemma), until $H$ has only one entry. By properties P2 and P3, this entry will be a PSP. We obtain the following theorem.



\begin{theorem}\label{thm:base_algo}
    Given an $n\times n$ matrix $A$, we can report a pseudo-saddlepoint (PSP) of $A$ in time $\OO{n\log n}$, by querying at most $2n-1$ entries of $A$. 
\end{theorem}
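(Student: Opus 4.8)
The plan is to instantiate the set $H$ with the $n$ diagonal entries of $A$ and then drive its size down to one by repeatedly invoking Lemma~\ref{lem:op}. First I would check that the initial choice $H = \{(1,1,a_{1,1}),\dots,(n,n,a_{n,n})\}$ satisfies the three invariants of \S\,\ref{sec3}: P1 holds because the diagonal contains exactly one entry from each row and each column; P2 and P3 hold vacuously, since no row and no column is missing from $H$. Building $H$ as a balanced binary search tree ordered by entry value (so that the two smallest and two largest triples are accessible in $\OO{\log n}$ time, as needed to feed Lemma~\ref{lem:op}) costs $n$ queries and $\OO{n \log n}$ time.

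Next I would run the main loop: while $|H| \geq 2$, locate the minimum-value triple $(i,j,a_{i,j}) = (r_1,c_1,v_1)$ and the maximum-value triple $(k,\ell,a_{k,\ell}) = (r_q,c_q,v_q)$ — these are distinct, and by P1 satisfy $i \neq k$ and $j \neq \ell$ — and apply Lemma~\ref{lem:op}. Each application performs one new query (of $a_{i,\ell}$), a constant number of comparisons, and $\OO{1}$ insertions/deletions into the search tree, hence runs in $\OO{\log n}$ time; crucially it decreases $|H|$ by exactly one, since in Case~3 one element is inserted while two are deleted, and it preserves P1, P2, P3. As $|H|$ starts at $n$ and drops by one per iteration, the loop runs exactly $n-1$ times, contributing $\OO{n\log n}$ further time and at most $n-1$ further queries, for a total of at most $n + (n-1) = 2n-1$ queried entries.

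Finally, I would argue correctness of the output. When the loop terminates, $H$ consists of a single triple $(i,j,v)$, so $v_1 = v_q = v$. By P2, every row other than row $i$ contains an entry $\geq v$, and row $i$ itself contains $a_{i,j} = v$; symmetrically, by P3 together with $a_{i,j} = v$, every column contains an entry $\leq v$. By the definition of a pseudo-saddlepoint, $a_{i,j}$ is a PSP of $A$, which completes the proof.

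I expect essentially no serious obstacle here: the whole argument rests on Lemma~\ref{lem:op}, and the only points needing (minimal) care are verifying that the diagonal initialization meets the invariants vacuously, noting that we need a data structure supporting extraction of both extremes (and their runners-up, for Case~3) rather than a plain min-heap, and confirming the net size decrease in Case~3. The query count is then just bookkeeping — $n$ initial diagonal queries plus one per iteration — where I would explicitly remark that $a_{i,\ell}$ in Lemma~\ref{lem:op} may coincide with a previously queried entry, but that even charging it afresh each time the bound $2n-1$ holds.
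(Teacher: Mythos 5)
Your proof is correct and follows essentially the same route as the paper: initialize $H$ with the diagonal (so P1 holds and P2, P3 hold vacuously), repeatedly apply Lemma~\ref{lem:op} while maintaining $H$ in a structure supporting both extremes in $\OO{\log n}$ time, and conclude from P2 and P3 that the last remaining entry is a PSP, with $n + (n-1) = 2n-1$ queries. The extra details you supply (the net size decrease in Case~3, the query accounting) are exactly the bookkeeping the paper leaves implicit.
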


\section{Bootstrapping the algorithm} \label{sec4}

We now improve the algorithm of Theorem~\ref{thm:base_algo}, by embedding it into a recursive approach. The following lemma allows the decomposition of the input matrix, which is the key step in our subsequent algorithm. 

\begin{lemma}\label{lemma:block_matrix}
    Let $A$ be an $n\times n$ matrix. Let $R_1, R_2, \ldots, R_k$ and $C_1, C_2, \ldots, C_\ell$ be sets of indices, so that $R_1 \cup \cdots \cup R_k = [n] = C_1 \cup \cdots \cup C_\ell$. For all $i \in [k]$ and all $j \in [\ell]$, let $A_{R_i,C_j}$ denote the submatrix of $A$ obtained by taking the rows and columns of $A$ with indices in $R_i$, resp.\ $C_j$. Let $A'$ be a $k\times \ell$ matrix whose entry $a'_{i,j}$ is a PSP of $A_{R_i,C_j}$ for all $i \in [k]$ and $j \in [\ell]$.
    Then all PSPs of $A'$ are PSPs of $A$.
\end{lemma}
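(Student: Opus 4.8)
The plan is to work directly from the characterization of PSPs via the two extremal quantities $C$ and $R$. Recall that an entry is a PSP of a matrix precisely when its value lies in the interval $[C,R]$, where $C$ is the maximum over columns of the column-minima, and $R$ is the minimum over rows of the row-maxima. So it suffices to show: if $a'_{i,j} = v$ is a PSP of $A'$, then (a) every row of $A$ contains an entry $\geq v$, and (b) every column of $A$ contains an entry $\leq v$. Since the two claims are symmetric (passing to $-A^T$ swaps rows with columns and PSPs with PSPs), I would prove (a) in detail and remark that (b) is analogous.

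For (a), fix an arbitrary row index $p \in [n]$ of $A$. Since $R_1 \cup \cdots \cup R_k = [n]$, there is some block-row index $i' \in [k]$ with $p \in R_{i'}$. Because $a'_{i,j}=v$ is a PSP of $A'$, and $A'$ has $\ell$ columns, in block-row $i'$ of $A'$ there is some entry $a'_{i',j'} \geq v$ (this is exactly the ``every row of $A'$ has an entry $\geq$ the PSP value'' half of the PSP condition applied to $A'$). Now $a'_{i',j'}$ is, by construction, a PSP of the submatrix $A_{R_{i'},C_{j'}}$. Applying the PSP condition inside that submatrix: since $p \in R_{i'}$, the row of $A_{R_{i'},C_{j'}}$ indexed by $p$ contains an entry $\geq a'_{i',j'} \geq v$. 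That entry of $A_{R_{i'},C_{j'}}$ is an entry of $A$ lying in row $p$, so row $p$ of $A$ contains an entry $\geq v$, as desired. Since $p$ was arbitrary, every row of $A$ has an entry $\geq v$, i.e.\ $v \leq R$ (where $R$ refers to $A$).

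For (b), fix a column index $q \in [n]$ of $A$; pick $j' \in [\ell]$ with $q \in C_{j'}$; the PSP condition for $A'$ gives some $a'_{i',j'} \leq v$ in block-column $j'$; and since $a'_{i',j'}$ is a PSP of $A_{R_{i'},C_{j'}}$ and $q \in C_{j'}$, the column of that submatrix indexed by $q$ has an entry $\leq a'_{i',j'} \leq v$, hence column $q$ of $A$ has an entry $\leq v$, i.e.\ $v \geq C$. Combining, $v \in [C,R]$ for $A$, so $a'_{i,j}$ is a PSP of $A$.

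I do not anticipate a real obstacle here — the argument is a two-level unfolding of the same PSP inequality. The only point requiring a little care is keeping the block indices straight: one must note that the block-row/block-column witnessing the PSP inequality in $A'$ (the pair $(i',j')$) is generally \emph{not} the same as the block $(i,j)$ in which the PSP entry $a'_{i,j}$ itself sits, and that it may even differ between the row-side argument and the column-side argument. A secondary subtlety worth a sentence is that the sets $R_i$ (resp.\ $C_j$) are allowed to overlap and need not be a partition; this causes no trouble, since we only ever need \emph{some} block containing a given index $p$ (resp.\ $q$), and the covering hypothesis $R_1 \cup \cdots \cup R_k = [n]$ guarantees one exists.
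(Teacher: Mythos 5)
Your proof is correct and follows essentially the same route as the paper's: unfold the PSP condition once in $A'$ to get, for each block-row (block-column), an entry $a'_{i',j'} \geq v$ (resp.\ $\leq v$), then unfold it again inside the corresponding submatrix $A_{R_{i'},C_{j'}}$ to cover every row (column) of $A$. The paper's version is just phrased by quantifying over blocks rather than fixing an individual row/column index, so there is nothing substantive to change.
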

\begin{proof}
    Let $v$ be the value of a PSP of $A'$. By definition, for every $i \in [k]$, there is some $j \in [\ell]$, so that $a'_{i,j} \geq v$. Because $a'_{i,j}$ is a PSP of $A_{R_i,C_j}$, every row  of $A$ with index in $R_i$ has an entry larger or equal to $a'_{i,j}$. Thus, every row of $A$ has an entry larger or equal to $v$. A symmetric argument shows that every column of $A$ has an entry smaller or equal to $v$.
\end{proof}

We are ready to describe the main subroutine of our algorithm. For ease of presentation we start with a simpler version that already has an almost linear runtime bound. 

\begin{theorem}\label{thm:pseudo_logstar}
    Given an $n\times n$ matrix $A$, we can report a pseudo-saddlepoint (PSP) of $A$ in $\OO{n \cdot 2^{\lg^*{n}}}$ time.
\end{theorem}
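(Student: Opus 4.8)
The plan is to use Lemma~\ref{lemma:block_matrix} to set up a two-level recursion that shrinks the matrix rapidly. Given the $n\times n$ matrix $A$, partition both the rows and the columns into groups of size $g$ (with $g$ to be chosen), giving $k=\ell=\lceil n/g\rceil$ groups. This induces $(n/g)^2$ submatrices $A_{R_i,C_j}$, each of size $g\times g$. Recursively compute a PSP in each submatrix; this yields the $k\times\ell$ matrix $A'$ of Lemma~\ref{lemma:block_matrix}. Now $A'$ has side length roughly $n/g$, and by Lemma~\ref{lemma:block_matrix} any PSP of $A'$ is a PSP of $A$, so it suffices to find a PSP of $A'$. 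For that last step I would \emph{not} recurse, but instead invoke the baseline algorithm of Theorem~\ref{thm:base_algo}, which finds a PSP of an $(n/g)\times(n/g)$ matrix in $\OO{(n/g)\lg(n/g)}$ time.

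The key is the choice of group size. I would pick $g=\lg n$ (rounded up). Then the baseline call on $A'$ costs $\OO{(n/\lg n)\cdot \lg(n/\lg n)} = \OO{n}$, which is negligible. The recursive calls: there are $(n/\lg n)^2$ of them, each on a matrix of size $\lg n\times \lg n$. Letting $T(n)$ denote the cost on an $n\times n$ matrix, this gives the recurrence
\begin{equation*}
T(n) \;\le\; \pars*{\frac{n}{\lg n}}^{2}\, T(\lg n) \;+\; \OO{n}.
\end{equation*}
Unrolling, after one level we are down to instances of size $\lg n$, after two levels to size $\lg\lg n$, and so on, so the recursion bottoms out after $\lg^* n$ levels (at which point the matrices have constant size and cost $O(1)$ each). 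The total number of base-size leaves is bounded by the product of the branching factors; I would show by induction that the total work summed over a level is $\OO{n \cdot 2^{t}}$ at recursion depth $t$ — each level of recursion loses at most a constant factor of $2$ in the "work per original row" — because the $\OO{n}$ additive term dominates the cost of combining, and the number of rows across all subproblems at depth $t$ is at most $n\cdot 2^t$ (each splitting step at worst doubles the row count, owing to the ceiling in $\lceil n/g\rceil$ and overlaps as in Lemma~\ref{lemma:general_to_square}). Summing over the $\lg^* n$ levels gives $\sum_{t=0}^{\lg^* n}\OO{n\cdot 2^t}=\OO{n\cdot 2^{\lg^* n}}$.

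The main obstacle is the accounting of the additive overhead and the row-count blow-up across levels. Naively, the recurrence $T(n)\le (n/\lg n)^2 T(\lg n)+\OO{n}$ looks like it could explode, since the branching factor $(n/\lg n)^2$ is huge; the point is that each subproblem is correspondingly tiny, and the right potential to track is not $T(n)$ itself but the aggregate cost as a function of the \emph{total number of matrix rows} present at a given recursion level. One must check carefully that combining results (forming $A'$ and running the baseline algorithm on it) is only $\OO{n}$ per level and does not itself depend on $\lg n$ in a way that breaks the bound — here the choice $g=\lg n$ is exactly what makes the baseline call on $A'$ cost $\OO{n}$. I would also need to confirm that the partition into groups of size $g$ correctly covers $[n]$ (using the ceiling, with one slightly smaller or overlapping group, exactly as in Lemma~\ref{lemma:general_to_square}), and that the "$+\OO{n}$" term, once multiplied by the number of surviving subproblems at each depth, contributes only a geometric series in $2^t$, yielding the stated $\OO{n\cdot 2^{\lg^* n}}$ bound. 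Correctness throughout is immediate from Lemma~\ref{lemma:block_matrix} applied at every level.
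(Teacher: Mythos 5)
There is a genuine gap, and it is fatal to the stated bound. Your plan eagerly computes a PSP of \emph{every} one of the $\ceil{n/g}^2$ submatrices $A_{R_i,C_j}$ before running the baseline algorithm on $A'$. With $g=\lg n$ that is $\Theta\bigl((n/\lg n)^2\bigr)$ recursive calls, and even charging a single unit of work per call (let alone the $\Omega(\lg n)$ needed to look at a submatrix) already costs $\Theta(n^2/\lg^2 n)$, which vastly exceeds $\OO{n\cdot 2^{\lg^* n}}$. Your accounting step --- ``the number of rows across all subproblems at depth $t$ is at most $n\cdot 2^t$'' --- contradicts your own decomposition: at depth $1$ you have $(n/\lg n)^2$ subproblems of $\lg n$ rows each, i.e.\ $n^2/\lg n$ rows in total, not $2n$. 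The ceilings and overlaps of Lemma~\ref{lemma:general_to_square} only account for a constant-factor loss per level; they cannot repair a quadratic branching factor. So the recurrence $T(n)\le (n/\lg n)^2\,T(\lg n)+\OO{n}$ does not resolve to the claimed bound; it is $\Omega(n^2/\lg n)$.

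The missing idea is lazy evaluation of $A'$, made possible by the query bound in Theorem~\ref{thm:base_algo}: the baseline algorithm on the $k\times k$ matrix $A'$ (with $k=\ceil{n/\ceil{\lg n}}$) inspects at most $2k-1$ of its entries. So one should \emph{not} materialize $A'$; instead, run the baseline algorithm on $A'$ and, only when it queries an entry $a'_{i,j}$, recursively compute a PSP of the corresponding $\ceil{\lg n}\times\ceil{\lg n}$ submatrix on demand. Correctness is still exactly Lemma~\ref{lemma:block_matrix} (the unqueried entries of $A'$ are well defined conceptually, even if never computed). This replaces the branching factor $(n/\lg n)^2$ by $2\ceil{n/\ceil{\lg n}}-1$, the top-level cost of the baseline run is $\OO{(n/\lg n)\lg(n/\lg n)}=\OO{n}$ as you noted, and the recurrence $T(n)\le cn+\bigl(2\ceil{n/\ceil{\lg n}}-1\bigr)T(\ceil{\lg n})$ does yield $\OO{n\cdot 2^{\lg^* n}}$ (each level of the $\lg^* n$-deep recursion loses roughly a factor $2$ from the $2k-1$ branching, which is where the $2^{\lg^* n}$ comes from). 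Also note your recursion should bottom out at $n=1$ (or constant size) using the same algorithm recursively at every level, rather than switching to the baseline only at the top; otherwise the depth argument does not apply.
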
 
\begin{proof} 
    The algorithm works as follows.
    \begin{itemize}
        \item If $n=1$, return the only entry of $A$.
        \item Otherwise, let $\ell = \ceil{\lg{n}}$, and divide the rows and columns of $A$ each into $\bigl\lceil\frac{n}{\ell}\bigr\rceil$ (possibly overlapping) intervals of size $\ell$. This divides $A$ into $\bigl\lceil{\frac{n}{\ell}\bigr\rceil}^2$ (possibly overlapping) square matrices of size $\ell \times \ell$. We obtain a new matrix $A'$ by conceptually replacing each smaller matrix with a PSP of that matrix.
        \item Run the algorithm of Theorem \ref{thm:base_algo} on $A'$. Each time a new entry of $A'$ is queried, run the current algorithm recursively on the corresponding submatrix of $A$ to obtain the sought value.
    \end{itemize}

    The correctness of the algorithm follows directly from Lemma~\ref{lemma:block_matrix}. 
    Let us turn to the runtime analysis. In the first level of the recursion, the current algorithm runs the algorithm of Theorem \ref{thm:base_algo} on a $\bigr\lceil{\frac{n}{\ceil{\lg n}}}\bigr\rceil\times \bigr\lceil{\frac{n}{\ceil{\lg n}}}\bigr\rceil$ matrix (costing $O(n)$ time) and makes $2\bigl\lceil{\frac{n}{\ceil{\lg n}}}\bigr\rceil-1$ recursive calls on matrices of size $\ceil{\lg n}\times \ceil{\lg n}$. 
    Thus, for all $n\geq 2$ and a large enough constant $c$, the runtime $T(n)$ of the algorithm obeys
    \[T(n) \leq cn + \left(2\ceil{\frac{n}{\ceil{\lg n}}}-1\right)T(\ceil{\lg n}).\]

We show by induction that  $T(n) \leq 2cn \cdot 2^{\lg^* n}-3cn\lg^*n$ for all $n\geq 1$, and thus $T(n) \in \OO{n \cdot 2^{\lg^*{n}}}$. For $1\leq n \leq 16$, the bound holds assuming $c$ is large enough. Now assume that $n > 16$ and that the result is true for all values smaller than $n$. We have:
    \begin{align}
        T(n) &\leq cn + \left(2\ceil{\frac{n}{\ceil{\lg n}}}-1\right)T(\ceil{\lg n}) \notag\\
        &\leq cn + \left(2\ceil{{\frac{n}{\ceil{\lg n}}}}-1\right) \left(2c\ceil{\lg n}\cdot 2^{\lg^*(\ceil{\lg n})}-3c\ceil{\lg n}\lg^*(\ceil{\lg n}) \right)\\
        &\leq cn + \left(\frac{2n}{\ceil{\lg n}}+1\right) \left(2c\ceil{\lg n}\cdot 2^{\lg^*(n)-1}-3c\ceil{\lg n}(\lg^*(n)-1) \right)\\
        &\leq 2cn \cdot 2^{\lg^*{n}} - 3cn\lg^*n - 3cn\lg^*n +7cn+c\ceil{\lg n} \cdot 2^{\lg^*{n}} \\
        &\leq 2cn \cdot 2^{\lg^*{n}} - 3cn\lg^*n - 12cn +7cn+5cn \\
        &\leq 2cn \cdot 2^{\lg^* n}-3cn\lg^*n. \notag
    \end{align}
Here,
\begin{itemize}
    \item (1) follows by induction,
    \item (2) uses that $\ceil{x} \leq x+1$ for all $x$, and that $\lg^*(\ceil{\lg n}) = \lg^*{n}-1$ for $n \geq 2$,
    \item (3) follows 
    via simple manipulation and dropping a negative term, 
    \item (4) uses the facts that $\ceil{\lg n}\cdot 2^{\lg^* n} < 5n$ for $n\geq 1$, and $\lg^* n \geq 4$ for $n>16$. \qedhere
\end{itemize}
\end{proof}

We remark in passing that an early stopping of the recursion would yield for all $k \in O(\lg^*{n})$, a runtime of $n \cdot 2^{\OO{k}}\lg^{(k)} n$ with only $n \cdot 2^{\OO{k}}$ entries of $A$ queried.

\medskip

Our overall algorithm is as follows: Given an input matrix $A$ of size $m \times n$ with $m \geq n$, first find a PSP $s$ of $A$ (via Theorem~\ref{thm:pseudo_logstar} and Corollary~\ref{cor}) in time $\OO{m \cdot 2^{\lg^*{n}}}$. Then, verify whether $A$ admits a SSP of value $s$ (via Lemma~\ref{lem:feas}) in time $\OO{m+n}$. If yes, then report it, if not, then conclude (by Lemma~\ref{lemma:strict_is_pseudo}) that $A$ has no SSP. This yields  an overall runtime of $\OO{m \cdot 2^{\lg^*{n}}}$.

\subparagraph*{Remark.} Our algorithm can also be used for computing the \emph{value} of the (non-strict) saddlepoint (SP), assuming that it exists, within the same runtime: we simply find a PSP $s$ of $A$ (via Theorem~\ref{thm:pseudo_logstar} and Corollary~\ref{cor}), and conclude (by Lemma~\ref{lemma:strict_is_pseudo}) that the SP value is~$s$. \emph{Locating} a SP entry requires quadratic time in the worst case~\cite{Llewellyn1988}. We can improve this, however, if the SP value $s$ appears only few times in $A$. More precisely, we can locate a SP of value $s$ in an $m \times n$ matrix $A$ with $k$ entries of value $s$ in $\OO{k(m+n)}$ additional time. 

The approach is as follows: Run the horizontal search of Lemma~\ref{lem:feas} with value $s$. The search necessarily succeeds, finding in each column an entry of value at most $s$. The SP must be in a column where we encountered the value $s$. There are at most $k$ such columns, so look through all of them in $O(mk)$ time to collect all candidate entries of value $s$ (again, at most $k$ of them). Test the candidates in $\OO{m+n}$ time each, for a total of $\OO{k(m+n)}$.
 
\section{Improved runtime}\label{sec_imp} 
A closer look at the baseline algorithm of Theorem~\ref{thm:base_algo} reveals that only $n-1$ of the $2n-1$ queried entries are chosen adaptively during runtime; the remaining $n$ entries are on the main diagonal $(a_{i,i})_{i \in [n]}$ of the input matrix $A$. A similar observation applies to the algorithm of Theorem~\ref{thm:pseudo_logstar} described in \S\,\ref{sec4}. Denoting $\ell = \ceil{\lg{n}}$, here $\bigr\lceil{\frac{n}{\ell}\bigr\rceil}$ of the 2$\bigr\lceil{\frac{n}{\ell}\bigr\rceil}-1$ recursive calls are for submatrices whose position is fixed upfront. 

This suggests an improvement to the algorithm of Theorem~\ref{thm:pseudo_logstar}, by solving $\bigr\lceil{\frac{n}{\ell}\bigr\rceil}$ of the subproblems directly, in an $O(n)$ time preprocessing step, and thereby reducing the number of recursive calls. 

Let us first fix the decomposition of the input matrix $A$ into submatrices of size $\ell \times \ell$ as follows. For $i = 0, \dots, \floor{{n}/{\ell}}-1$, let $R_{i} = C_{i} = [i\cdot \ell+1, (i+1)\cdot \ell]$, and set the last (possibly overlapping) interval $R_{\floor{{n}/{\ell}}} = C_{\floor{{n}/{\ell}}} = [n-{\ell}+1, n]$. We have $\cup_i{R_i} = \cup_i{C_i} = [n]$.

\begin{lemma}\label{lem_imp}
Given an $n \times n$ matrix $A$, we can transform it in time $O(n)$ into an $n \times n$ matrix $B$, so that a PSP of value $q$ of $B$ implies a PSP of value $q$ of $A$. Moreover, in $O(n)$ time we can compute a PSP of every ``diagonal box'' $B_{R_i,C_i}$. 
\end{lemma}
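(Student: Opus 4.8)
The plan is to process the diagonal boxes $B_{R_0,C_0}, \dots, B_{R_{\floor{n/\ell}},C_{\floor{n/\ell}}}$ one at a time, using a single linear pass over the whole matrix. There are $\floor{n/\ell}+1 = O(n/\ell)$ such boxes, each of size $\ell \times \ell$, so their entries partition (up to the small overlap of the last box) into $O(n/\ell) \cdot \ell^2 = O(n\ell)$ cells — too many to look at directly. The key realization is that we do \emph{not} need to compute a PSP of each $B_{R_i,C_i}$ from scratch; instead we only need \emph{some} value that is guaranteed to be a PSP, and we are allowed to first \emph{modify} the matrix (producing $B$) as long as PSPs of $B$ pull back to PSPs of $A$ with the same value. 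So the first step is: for each diagonal box, pick a cheap-to-identify entry and overwrite a controlled set of entries of that box so that the picked entry becomes a genuine PSP of the box, while the modification is ``harmless'' at the level of the whole matrix.

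Concretely, I would proceed as follows. Fix a diagonal box $A_{R_i,C_i}$. In a single scan of the $\ell$ rows with indices in $R_i$ (restricted to the $\ell$ columns in $C_i$) we can find, in $O(\ell^2)$ time, the entry $v$ that is the \emph{minimum of the row-maxima} of this box, i.e.\ the value $R$ from the definition of PSP applied to $A_{R_i,C_i}$; say it sits at position $(p,c)$. By the Observation that every matrix has a PSP (indeed $C \le R$ there), $v$ is already a PSP of the box \emph{unless} some column-minimum of the box exceeds $v$. To fix this without breaking anything globally, set $B := A$ and then, for each column $c' \in C_i$ whose minimum within the box is strictly larger than $v$, overwrite the single entry $b_{p,c'} := v$ (row $p$ is the row achieving the minimum row-maximum, so $a_{p,c'} \le v$ already held for the entry $a_{p,c}=v$, but for other columns we may be lowering it). After this, within $B_{R_i,C_i}$: every column has an entry $\le v$ (its old minimum if that was $\le v$, else the freshly written $v$), and every row still has an entry $\ge v$ — row $p$ still has $v$ itself as its max is unchanged or we only lowered entries in row $p$, so we must be slightly more careful and instead argue about the rows via the pulled-back statement. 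The clean way: after the overwrite, $v$ is a PSP of $B_{R_i,C_i}$ by construction, and since we only ever \emph{decreased} entries of $B$ relative to $A$ and only in positions lying inside diagonal boxes, and since the position where we keep value $v$ genuinely had value $v$ in $A$, Lemma \ref{lemma:block_matrix} (or a direct argument) gives that a PSP of $B$ of value $q$ is a PSP of $A$ of value $q$ — decreasing entries only makes it \emph{easier} for columns to have a small entry and can only \emph{hurt} rows, so one checks that the particular decreases we made do not destroy the row-side guarantee because each modified entry was in a row that still retains a witness $\ge v \ge q$ is not automatic; hence I would instead take the more robust route of the next paragraph.

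The robust route, which I expect to be the crux: rather than decreasing arbitrary entries, restrict all modifications to \emph{entirely new rows/columns introduced only within the box's own local coordinate system} — i.e.\ realize that the algorithm never needs $B$ to be globally consistent, only that each diagonal box of $B$ has a known PSP and that the transformation is monotone in the sense required. The simplest fully-correct construction: for diagonal box $i$, compute in $O(\ell^2)$ time both $C_i^{\star}$ (max of its column-minima) and $R_i^{\star}$ (min of its row-maxima); we showed $C_i^{\star} \le R_i^{\star}$, and \emph{any} entry of the box with value in $[C_i^{\star}, R_i^{\star}]$ is already a PSP of $A_{R_i,C_i}$ with no modification at all — in particular the entry realizing $R_i^{\star}$ works. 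So in fact $B = A$ and we are simply \emph{recording}, for each $i$, a pointer to one already-existing PSP entry of $B_{R_i,C_i}$, found by brute force in $O(\ell^2)$ time per box. Summing, $\sum_i O(\ell^2) = O((n/\ell+1)\ell^2) = O(n\ell) = O(n\lg n)$, which is \emph{too slow} — so the genuine obstacle is the budget: we cannot afford to read all $O(n\ell)$ entries of the diagonal boxes. The resolution must exploit that we only need \emph{one} PSP value and may \emph{write} to $B$: for each box it suffices to read a single row and a single column (the first row $R_i[1]$ and first column $C_i[1]$ of the box, $O(\ell)$ entries), let $v$ be the maximum of that row, overwrite the first column of the box to make $v$ the column-minimum-witness, i.e.\ set every entry of column $C_i[1]$ within the box to $\min$ of itself and $v$; then $v$ is the row-maximum of the first row (and remains the row-max since we only touched column $C_i[1]$ and only lowered it) and is now $\ge$ the minimum of every column (the first column has min $\le v$, every other column already had the first row's entry $\le v$). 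Thus $v$ is a PSP of $B_{R_i,C_i}$, only entries inside column $C_i[1]$ of this box were decreased, total work $O(\ell)$ per box, $O(n)$ overall, and because every modified entry lies in a row that retains the unmodified entry $v$ (in column $c$, the argmax of the first row) as a witness $\ge v \ge q$ for any PSP value $q \le v$ of $B$, pulling back to $A$ is immediate. I would write the lemma's proof around this last construction, and flag the budget constraint (not the correctness) as the single subtlety worth spelling out.
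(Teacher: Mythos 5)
There is a genuine gap in the final construction you settle on, and it is exactly at the point you flag as "the crux". Taking $v$ to be the maximum of the first row of a diagonal box and lowering the box's first column to $\min(\cdot,v)$ only establishes \emph{half} of the PSP condition: every column of the box has an entry $\leq v$ (which, incidentally, already holds before any modification, since every column meets the first row). The other half --- every \emph{row} of the box must contain an entry $\geq v$ --- is never addressed and is false in general: for the box $\bigl(\begin{smallmatrix}5 & 9\\ 1 & 2\end{smallmatrix}\bigr)$ your procedure outputs $v=9$ and changes nothing, but the second row has no entry $\geq 9$, so $9$ is not a PSP of the box. Your pull-back argument has the same flaw: the modified entries sit in rows $R_i[2],\dots,R_i[\ell]$ of the box, which do \emph{not} contain the witness entry $v$ at position $(R_i[1],c)$, so the claim that "every modified entry lies in a row that retains the unmodified entry $v$" is untrue. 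The natural repair --- raising the first column to $\max(\cdot,v)$ so that every row of the box gets a witness $\geq v$ --- fixes the local PSP claim but breaks the global requirement: raising entries can increase row maxima of the full matrix, hence increase $R$ (the minimum of row-maxima), so $B$ can acquire PSP values strictly larger than any PSP value of $A$. The symmetric variant ($v=$ min of the first column, lowering the first row) fails for the mirror reason ($C$ can decrease). The underlying difficulty is that one may only overwrite a position with value $v$ if its row \emph{already} contains an entry $\geq v$ and its column \emph{already} contains an entry $\leq v$; a single row plus a single column of a box cannot certify both sides for all positions you write to.

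The paper's proof resolves precisely this tension with a different mechanism: it recursively selects the \emph{median} of the antidiagonal, partitions the antidiagonal around it by swapping whole rows and columns (constant-time bookkeeping per swap), and only then copies the median value $v$ onto the first half of the main diagonal. The partitioning guarantees that each overwritten position lies in a row that already holds an antidiagonal entry $\geq v$ and a column that already holds an antidiagonal entry $\leq v$, so $C$ cannot decrease and $R$ cannot increase, i.e.\ $[C',R']\subseteq[C,R]$ and PSPs of $B$ pull back to PSPs of $A$. Recursing on the lower-right quadrant costs $c(n+n/2+n/4+\cdots)=O(n)$ and leaves the diagonal consisting of at most $\lg n$ constant runs plus an $O(\lg n)$-length tail; every diagonal box whose diagonal is constant trivially has that constant as a PSP, and the $\lg n + O(1)$ exceptional boxes are handled with the baseline algorithm in $O(\lg n\cdot\lg n\lg\lg n)\subset O(n)$ time. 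You correctly identified that the budget forces writing into $B$ rather than reading whole boxes, but the specific write you propose does not yield a PSP of the box, and making it yield one destroys the containment $[C_B,R_B]\subseteq[C_A,R_A]$ that the lemma needs.
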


Lemma~\ref{lem_imp} serves as a preprocessing step for each call of the algorithm of Theorem~\ref{thm:pseudo_logstar}.
With this preprocessing, the algorithm of Theorem~\ref{thm:pseudo_logstar} needs to make only $\ceil{\frac{n}{\ell}}-1$ recursive calls on matrices of size $\ell \times \ell$. As far as these recursive calls are concerned, the preprocessed matrix $B$ is identical to $A$, up to permuting rows and columns, which can be maintained using straightforward bookkeeping. 

Thus, for all $n \geq 2$ and a large enough constant $c'$, the recurrence for the runtime  becomes 
\[T(n) \leq c'n + \left(\ceil{\frac{n}{\ell}}-1\right)T(\ell).\]

We show by induction that  $T(n) \leq 2c'n \lg^* n$ for all $n\geq 1$, and thus $T(n) \in \OO{n \lg^*{n}}$: 
    \begin{align}
        T(n) &\leq c'n + \left(\ceil{\frac{n}{\ell}}-1\right)T(\ell) \notag\\
        &\leq c'n + \left(\ceil{{\frac{n}{\ell}}}-1\right) \left(2c'\ell \lg^*{\ell} \right) \notag\\
        &\leq c'n + 2c'n  ( \lg^*(n)-1) \notag\\
        &\leq 2c'n \lg^*{n}. \notag
    \end{align}

Together with Corollary~\ref{cor} and Lemma~\ref{lem:feas} and Lemma~\ref{lemma:strict_is_pseudo}, this implies our main result.

\restatethma*

Here again one could stop the recursion early, yielding for all $k \in O(\lg^*{n})$, a runtime of $O(n\lg^{(k)}n + nk)$ with only $\OO{nk}$ entries of $A$ queried.

\medskip

It remains to describe and analyze the preprocessing step.

\begin{proof}[Proof of Lemma~\ref{lem_imp}]

Given a square input matrix $A$, consider the following transformation, written as an in-place procedure, that results in a matrix $B$ of the same size.

\medskip

\begin{algorithmic}[1]
\Statex \textbf{Transform$(A,t)$} 
\Statex \textbf{Input:} an $n \times n$ matrix $A$ and stopping threshold $t$.
\State \textbf{if {$n \leq t$} } \textbf{then halt} 
\State $v \gets $ \textbf{Select}$(\{a_{n,1}, a_{n-1,2}, \dots, a_{1,n}\}, \ceil{n/2})$ 
\State \textbf{Partition} $(a_{n,1}, a_{n-1,2}, \dots, a_{1,n})$ around $v$
\State $a_{i,i} \gets v$ for $i=1, \dots, \ceil{n/2}$
\State \textbf{Transform$(A_{[\ceil{n/2}+1,n],[\ceil{n/2}+1,n]},t)$}
\end{algorithmic}

\medskip

The transformation works as follows: select the median (element of rank $\ceil{n/2}$) $v$ of the antidiagonal $\{a_{n,1}, a_{n-1,2}, \dots, a_{1,n}\}$.
Then, partition the antidiagonal around $v$ as detailed below, so that $v$ goes into position $a_{n-\ceil{n/2}+1, \ceil{n/2}}$ of the matrix, with entries smaller on its left and entries larger on its right. 

Then, set the first $\ceil{n/2}$ entries on the main diagonal to $v$. Finally, repeat the transformation recursively on the bottom right quadrant of the matrix, starting from an entry of the main diagonal. Stop when the matrix size falls below a stopping threshold $t$. The initial call is Transform$(A,2\lg{n})$, to preprocess a matrix $A$ of size $n \times n$, with stopping threshold $t = 2\lg{n}$; the early stopping is to avoid affecting the rightmost (overlapping) boxes of $A$. The effect of the transformation is illustrated in Figure~\ref{fig:transform}(a).

\begin{figure}[tbh]
\centering
\includegraphics[height=2in]{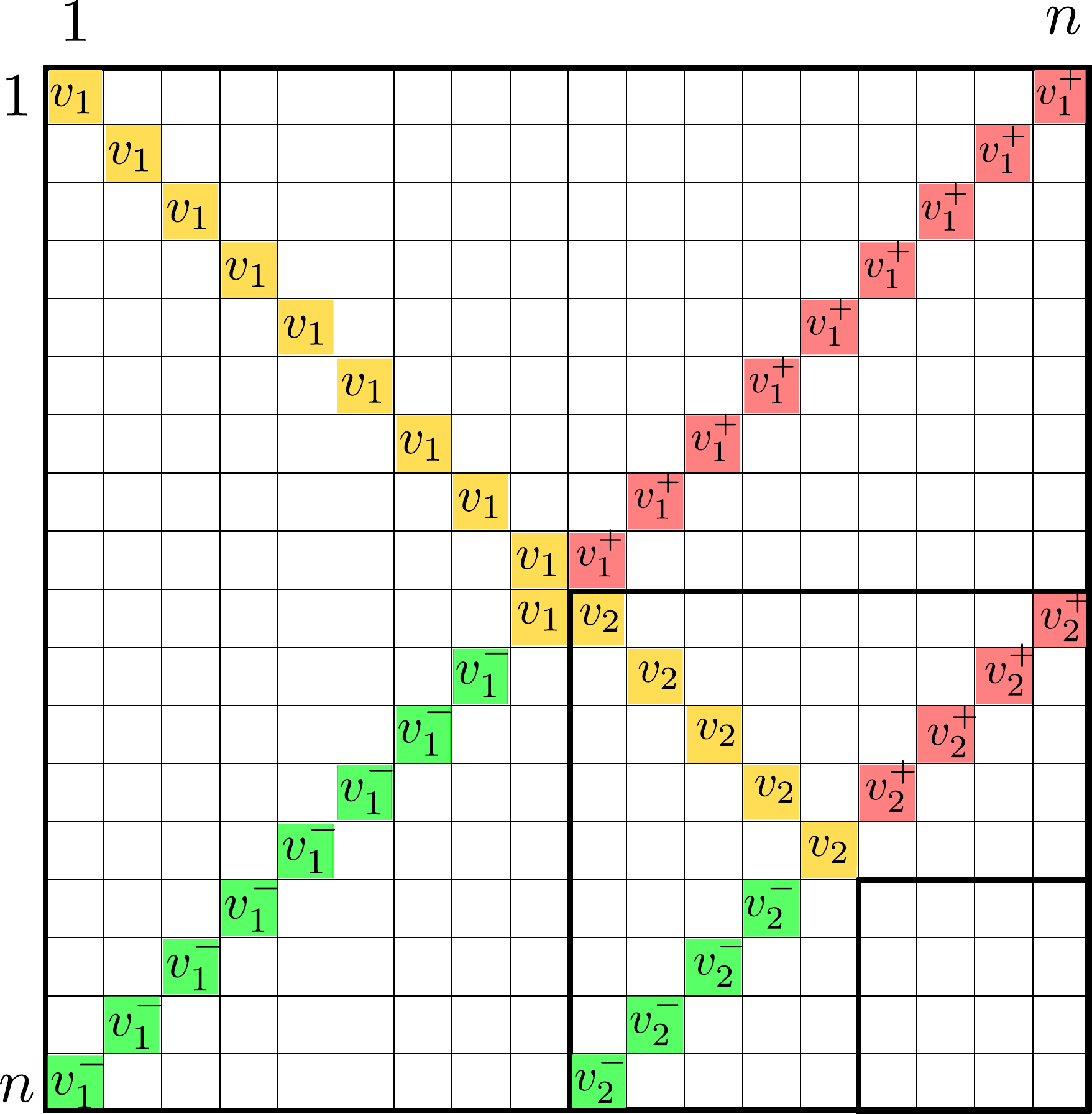}~~~~
\includegraphics[height=2in]{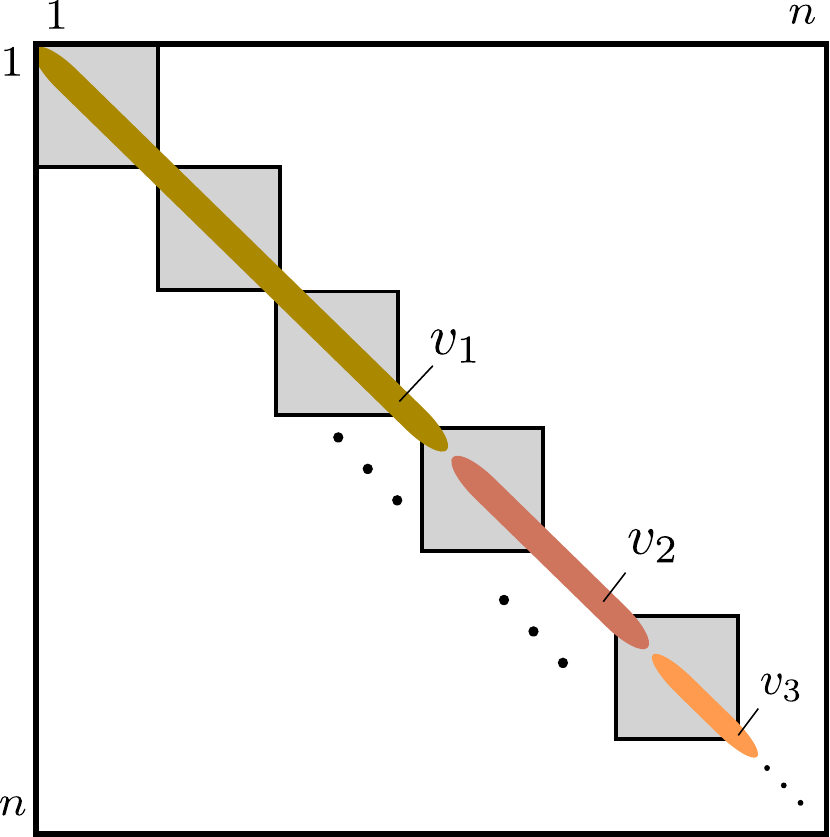}
\caption{\label{fig:transform} (\emph{a}) The effect of procedure Transform on an input matrix $A$ of size $18 \times 18$: the antidiagonal is partitioned around the median, and the median value is copied to half of the main diagonal. The procedure is repeated recursively on the lower right quadrant. Values $v_1, v_2, \dots$ denote the medians found in subsequent calls, and $v_i^-$, $v_i^+$ indicate values $\leq v_i$, resp.\ $\geq v_i$.\\ Note that swaps during the partitioning step of recursive call $i$ may move elements away from the antidiagonal from call $j < i$. For simplicity, this is not reflected in the figure. \\
(\emph{b}) Solving the diagonal-subproblems in the algorithm of Theorem~\ref{thm:pseudo_logstar}. Squares indicate $\ceil{\lg{n}} \times \ceil{\lg{n}}$ size subproblems. All, but $\lg{n}+\OO{1}$ of these have uniform diagonal. }
\end{figure}

Let us first argue that the transformation takes linear time.  Indeed, line 2 can employ linear time selection (e.g.\ \cite{median}), and line 3 can be achieved by simulating a standard partitioning procedure on the antidiagonal, swapping pairs of entries by swapping the corresponding pairs of rows and columns. Note that when running line 3 in a recursive call of Transform, we still swap pairs of rows or columns of the full matrix.  Notice that these operations can be implemented with simple bookkeeping in constant time per swap. Line 4 requires to copy the median element along the diagonal in $\ceil{n/2}$ locations (this can be achieved e.g.\ by saving the diagonal into an array). Thus, lines 1--4 take $cn$ time, for a sufficiently large $c$. With the recursive call in line 5, the total runtime of {Transform}$(A)$ can be bounded as $c(n + n/2 + n/4 + n/8 + \cdots) \leq 2cn \in \OO{n}$. 

\medskip

We next argue that a PSP of the transformed matrix $B$ implies a PSP of $A$ of the same value. Recall that $q$ is a PSP-value if and only if $q \in [C, R]$, where $C$ is the maximum of the
column-minima and $R$ is the minimum of the row-maxima.

Initially $B=A$, so the PSPs are the same. Consider a call of Transform at an arbitrary level of recursion. Swapping pairs of rows or columns does not affect the PSP values, thus the claim holds for lines 1--3. Copying the median value $v$ in line 4 cannot decrease $C$, since the affected column already contains a value at most $v$ due to the partitioning step. Similarly, it cannot increase $R$, since the affected row already contains a value at least $v$ (see Figure~\ref{fig:transform}(a)). Thus, denoting by $C'$, $R'$ the new maximum of column-minima, resp.\ minimum of row-maxima, we have $[C',R'] \subseteq [C,R]$, so no new PSPs are created.

\medskip

It remains to compute a PSP for each diagonal box $B_{R_i,C_i}$. Notice that the diagonals of these boxes coincide with the diagonal of the full (preprocessed) matrix $B$. Moreover, for all but at most $\lg{n} + \OO{1}$ of the boxes $B_{R_i,C_i}$, their diagonal contains a single value. This is because, after preprocessing, the diagonal of $B$ consists of at most $\lg{n}$ contiguous uniform sections (corresponding to calls of Transform), and a last section of length at most $2\lg{n}$, that was unaffected by Transform. Only boxes that intersect with the boundaries between sections, and a constant number of boxes at the end will have a non-uniform diagonal; see Figure~\ref{fig:transform}(b) for an illustration. 

Notice that if every diagonal entry of a matrix is $v$, then $v$ is a PSP of the matrix. Thus, all but $\lg{n}+\OO{1}$ of the diagonal boxes have their PSP readily available. For the remaining diagonal boxes, we call the baseline algorithm of Theorem~\ref{thm:base_algo}, adding a term $\OO{\lg{n} \cdot \lg{n}\lg\lg{n}} \subset \OO{n}$ to the runtime. The total runtime is $O(n)$, finishing the proof. 
\end{proof}

\section{An alternative approach} \label{sec5}

In this section we briefly describe an algorithm for the SSP problem with a runtime of $O(n \lg \lg n)$ on an $n \times n$ matrix $A$.  
While the bound is weaker than the previous ones, we find the approach worth mentioning due to its simplicity, and since it can run faster on certain inputs. 

The algorithm has two phases. The \emph{first phase} makes use of Observation~\ref{obs:search} 
to find progressively better upper and lower bounds on
the value of the SSP (should it exist), and thereby eliminate rows or columns of $A$ that cannot contain the SSP.
The \emph{second phase} begins when one side-length of the matrix has been reduced to
at most $\frac{n}{\lg n}$. It makes use of a heap, similarly to the algorithm of Theorem~\ref{thm:base_algo}, but with a different purpose: to also reduce the \emph{longer side} of the matrix to about $\frac{n}{\lg n}$. When both sides of the matrix have length $O{(\frac{n}{\lg{n}})}$, we can finish the job with the baseline algorithm of Theorem~\ref{thm:base_algo}, with a total runtime of $O(n)$. If a SSP is found, we perform an additional $O(n)$ time test with it on the original matrix, to rule out a false positive.

Recall that the search procedure of Lemma~\ref{lem:feas} returns (by Observation~\ref{obs:search}), in $\OO{m+n}$ time, for an $m \times n$ or $n \times m$ matrix $A$ and a search value $s$, one of four answers: (1) $A$ has the SSP $a_{i,j} = s$; (2) $A$ has no SSP; (3) the SSP of $A$ (if exists) is $>s$; or (4) the SSP of $A$ (if exists) is $<s$.

\subparagraph*{First phase.} Let $A$ denote the current $m' \times n'$ matrix, where $n \geq m' \geq n'$ (the other case is symmetric). Assume that entries of $A$ are $a_{i,j}$ with $i \in [m']$, resp.\ $j \in [n']$. 

Compute the median $v$
of the set of elements $D = \{ a_{i, \lceil \frac{i \cdot n'}{m'} \rceil} \mid i \in [m'] \}$ and 
search $A$ with $v$ using the procedure of Lemma~\ref{lem:feas}. If we learn that the SSP can only be larger than $v$, then
recurse on $A_{[m'], [n']-C'}$ where $C' = \{ j \mid a_{i,
j} \in D \land v \geq a_{i, j} \}$. If we learn that the SSP can only be smaller than $v$, then 
recurse on $A_{[m'] - R', [n']}$ where $R' = \{ i \mid a_{i, j} \in D \land
v \leq a_{i, j} \}$. If we find the SSP, or learn that a SSP does not exist, then halt accordingly. 

Observe that the reductions are justified. Indeed, if the SSP must be larger than $v$, then it cannot be in columns that contain entries smaller or equal to $v$, and hence, these columns can be deleted. If the SSP must be smaller than $v$, then it cannot be in rows that contain entries larger or equal to $v$, and hence, these rows can be deleted.

If either side-length of the matrix is reduced to $\frac{n}{\lg n}$, then proceed to the next phase.
\subparagraph*{Second phase.}
Let $A$ denote the current $m' \times n'$ matrix, where $n \geq m' \geq n'$ (the other case is symmetric). Assume $n' \leq \frac{n}{\lg{n}}$, and $m' > 4n'$ (otherwise we can stop). For each column $j \in [n']$, select $\lfloor \frac{m'}{2n'} \rfloor$ rows $R_j \subseteq [\floor{m'/2}]$, so that $R_j \cap R_{j'} = \emptyset$ whenever $j \neq j'$. Observe that initially only the first half of the $m'$ rows are picked. For each column $j$ calculate the minimum $m_j = \min\{a_{i,j} \mid i \in R_j\}$.

Insert the minima $m_1, \dots, m_{n'}$ into a max-heap. For $n'$ iterations, extract the maximum element from the heap. Suppose the currently extracted maximum is $m_j = a_{i,j}$. Then for all $i' \in R_j-\{i\}$, delete row $i'$ of the matrix $A$. 

This is justified, since a SSP cannot exist in row $i'$ of the matrix. Indeed, if such a SSP $a_{i', k}$ existed, then either (1) $k=j$ and we have a contradiction since $m_j \leq a_{i', j}$ (by the choice of $m_j$ as the minimum), or (2) $k\neq j$ which is a contradiction since $m_k \leq m_j \leq a_{i', j} < a_{i', k}$ (by the maximality of $m_j$ in the heap and the row condition of SSPs). Either case contradicts that $a_{i', k}$ is a SSP since it is not strictly smaller than some value in its column.

Now, in the column $j$, select $\lfloor \frac{m'}{2n'} \rfloor$ new elements with row index set $R_j \subseteq [m']$ disjoint from all other $R_{j'}$ with $j \neq j'$. 
Compute the new minimum $m_j = \min\{a_{i,j} \mid i \in R_j\}$, and reinsert $m_j$ into the max-heap.

In $n'$ iterations,
the process removes a constant fraction of rows. Repeat the phase $\OO{\lg\lg{n}}$ times, to reduce the number of rows to $m' \leq 4n'$. 

\subparagraph*{Running time.} Starting with an initial $n \times n$ matrix, every iteration of the first phase runs in $\OO{n}$ time and removes at least a 
constant fraction of the remaining rows or columns, which implies that in $\OO{n \lg \lg n}$ time at least one
side will be reduced to length $\frac{n}{\lg n}$.\footnote{Note that if iterations alternate between removing rows and columns, then the 
runtime can be described by a geometric series that evaluates to $\OO{n}$.}

In the second phase, initializing $R_j$ and $m_j$ for each column $j$ takes $\OO{n}$ total time. Then, each of the $n'$ iterations involve a constant number of heap operations of cost $\OO{\lg{n}}$, and a constant cost per the removal of each row. Since $n' \in O{(\frac{n}{\lg{n}})}$, the execution of the phase takes $\OO{n}$ total time. Repeated $\OO{\lg \lg n}$ times, this yields the total runtime of $\OO{n \lg \lg n}$.

\bigskip

\bibliography{saddlepoints}

\begin{thebibliography}{1}

\bibitem{Bienstock1991}
Daniel Bienstock, Fan Chung, Michael~L. Fredman, Alejandro~A. Sch\"{a}ffer,
  Peter~W. Shor, and Subhash Suri.
\newblock A note on finding a strict saddlepoint.
\newblock {\em Am. Math. Monthly}, 98(5):418–419, April 1991.
\newblock \href {https://doi.org/10.2307/2323858} {\path{doi:10.2307/2323858}}.

\bibitem{median}
Manuel Blum, Robert~W. Floyd, Vaughan~R. Pratt, Ronald~L. Rivest, and
  Robert~Endre Tarjan.
\newblock Time bounds for selection.
\newblock {\em J. Comput. Syst. Sci.}, 7(4):448--461, 1973.
\newblock \href {https://doi.org/10.1016/S0022-0000(73)80033-9}
  {\path{doi:10.1016/S0022-0000(73)80033-9}}.

\bibitem{Byrne1991}
Christopher~C. Byrne and Leonid~N. Vaserstein.
\newblock An improved algorithm for finding saddlepoints of two-person zero-sum
  games.
\newblock {\em Int. J. Game Theory}, 20(2):149--159, June 1991.

\bibitem{Hedet}
Stephen Hedetniemi.
\newblock Open problems in combinatorial optimization.
\newblock \url{https://people.computing.clemson.edu/~hedet/algorithms.html}.
\newblock Accessed: 2023-08-07.

\bibitem{Hofri}
Micha Hofri.
\newblock On the distribution of a saddle point value in a random matrix.
\newblock {\em Department of Computer Science, WPI}, 100, 2006.

\bibitem{HofriJacquet}
Micha Hofri and Philippe Jacquet.
\newblock Saddle points in random matrices: Analysis of {Knuth} search
  algorithms.
\newblock {\em Algorithmica}, 22(4):516--528, 1998.
\newblock \href {https://doi.org/10.1007/PL00009237}
  {\path{doi:10.1007/PL00009237}}.

\bibitem{Knuth1}
Donald~E. Knuth.
\newblock {\em The Art of Computer Programming, Volume 1 (3rd Ed.): Fundamental
  Algorithms}.
\newblock Addison Wesley Longman Publishing Co., Inc., USA, 1997.

\bibitem{Llewellyn1988}
Donna~Crystal Llewellyn, Craig Tovey, and Michael Trick.
\newblock Finding saddlepoints of two-person, zero sum games.
\newblock {\em The American Mathematical Monthly}, 95(10):912--918, 1988.
\newblock \href {https://doi.org/10.1080/00029890.1988.11972116}
  {\path{doi:10.1080/00029890.1988.11972116}}.

\bibitem{gt}
Michael Maschler, Shmuel Zamir, and Eilon Solan.
\newblock {\em Game theory}.
\newblock Cambridge University Press, 2020.

\end{thebibliography}
\end{document}